\documentclass[a4paper,UKenglish,cleveref,autoref,thm-restate]{lipics-v2021}

\usepackage{nicefrac}
\usepackage{mathtools}
\usepackage{xspace}
\usepackage{complexity}
\usepackage{amsfonts}
\usepackage[ruled]{algorithm}
\usepackage{algpseudocode}

\DeclarePairedDelimiter\norm{\lVert}{\rVert}
\DeclarePairedDelimiter\abs{\lvert}{\rvert}

\let\xFPT\FPT
\renewcommand{\FPT}{\xFPT\xspace}
\let\xNP\NP
\renewcommand{\NP}{\xNP\xspace}
\let\xPSPACE\PSPACE
\renewcommand{\PSPACE}{\xPSPACE\xspace}
\let\xXP\XP
\renewcommand{\XP}{\xXP\xspace}
\renewcommand{\subparagraph}[1]{%
    \par\vspace{0.4\baselineskip}%
    \noindent\textsf{\textbf{#1}}\hspace{0.5em}%
}

\newcommand{\CMP}{{\textsc{Coordinated Motion Planning}}\xspace}
\newcommand{\MMCMP}{{\textsc{Min-Makespan CMP}}\xspace}

\newcommand{\MSCMP}{{\textsc{Min-Sum CMP}}\xspace}

\newcommand{\MECMP}{{\textsc{Min-Exposure CMP}}\xspace}

\newcommand{\unitsquare}{\boxdot\xspace}

\newcommand{\diam}[1]{\ensuremath{d(#1)}}
\newcommand{\minSum}[1]{\ensuremath{D(#1)}}

\newcommand{\makespan}[1]{\ensuremath{\phi(#1)}}
\newcommand{\exposure}[1]{\ensuremath{\xi(#1)}}
\newcommand{\distSum}[1]{\ensuremath{\norm{#1}}}

\newcommand{\inner}{\ensuremath{\textsf{inner}}}

\graphicspath{{./figures/}}

\hideLIPIcs
\title{Minimum Exposure Motion Planning}
\titlerunning{Minimum Exposure Motion Planning}

\author{Sarita de Berg}{Department of Computer Science, IT University of Copenhagen, Denmark}{debe@itu.dk}{https://orcid.org/0000-0001-5555-966X}{Supported by VILLUM Foundation grant VIL37507 ``VIL37507 Efficient Recomputations for Changeful Problems''.}
\author{Joachim Gudmundsson}{School of Computer Science, University of Sydney, Australia}{joachim.gudmundsson@sydney.edu.au}{https://orcid.org/0000-0002-6778-7990}{}
\author{Peter Kramer}{Department of Computer Science, TU Braunschweig, Germany}{kramer@ibr.cs.tu-bs.de}{https://orcid.org/0000-0001-9635-5890}{Partially funded by a fellowship of the German Academic Exchange Service (DAAD) and the Deutsche Forschungsgemeinschaft (DFG, German Research Foundation), grant number 530918134.}
\author{Christian Rieck}{Institute of Mathematics, University of Kassel, Germany}{christian.rieck@mathematik.uni-kassel.de}{https://orcid.org/0000-0003-0846-5163}{Funded by the Deutsche Forschungsgemeinschaft (DFG, German Research Foundation) -- 522790373.}
\author{Sampson Wong}{Department of Computer Science, University of Copenhagen, Denmark}{sampson.wong123@gmail.com}{https://orcid.org/0000-0003-3803-3804}{Partially funded by the European Union's Marie Skłodowska-Curie Actions Postdoctoral Fellowship, grant number 101146276.}
\authorrunning{S. de Berg, J. Gudmundsson, P. Kramer, C. Rieck, and S. Wong}
\Copyright{Sarita de Berg, Joachim Gudmundsson, Peter Kramer, Christian Rieck, Sampson Wong}

\ccsdesc{Theory of computation~Computational geometry}
\ccsdesc{Computing methodologies~Motion path planning}

\keywords{Coordinated motion planning, square robots, parameterized complexity, rectilinear movement, weighted regions}

\nolinenumbers
\begin{document}

    \maketitle

    \begin{abstract}
        We investigate multiple fundamental variants of the classic coordinated motion planning~(CMP) problem for unit square robots in the plane under the $L_1$ metric.
        In coordinated motion planning, we are given two arrangements of $k$ robots and are tasked with finding a movement schedule that minimizes a certain objective function.
        The two most prominent objective functions are the sum of distances traveled (\textsc{Min-Sum}) and the latest time of arrival (\textsc{Min-Makespan}).
        Both objectives have previously been studied extensively.

        We introduce a new objective function for CMP in the plane.
        The proposed \textsc{Min-Exposure} objective function defines a set of polygonal regions in the plane that provide cover and asks for a schedule with minimal elapsed time during which at least one robot is partially or fully outside of these regions.
        We give an $\mathcal{O}(n^4\log n)$ time algorithm that computes exposure-minimal schedules for $k=2$ robots, and an \XP algorithm for arbitrary $k$.
        As a result of independent interest, we leverage new insights to prove that both the \textsc{Min-Makespan} and  \textsc{Min-Sum} objectives are fixed-parameter tractable~(\FPT) parameterized by the number of robots.
        Our parameterized complexity results generalize known \FPT results for rectangular grid graphs [Eiben, Ganian,~and~Kanj,~SoCG'23].
    \end{abstract}
    \setcounter{page}{0}

\newpage
\section{Introduction}
\label{sec:introduction}

The growing deployment of autonomous systems in domains such as search and rescue~\cite{drew2021searchandrescue,QAMAR2023101734}, warehouse automation~\cite{0001TKDKK21,Varambally0K22}, and hazardous-material handling~\cite{hamel2001elements,trevelyan.hamel.kang2016robotics} has intensified the need for efficient coordinated multi-robot motion planning (CMP).
From the early days of computational geometry, researchers have recognized the deep algorithmic challenges inherent in planning  the coordinated motion of many robots simultaneously. 
In such settings, avoiding both obstacles and collisions between robots lifts the problem into a high-dimensional configuration space whose complexity grows rapidly with the number of robots, as already demonstrated by foundational results like those of Schwartz and Sharir~\cite{schwartz.sharir1983pianomovers}.
Meeting practical performance demands, such as minimizing the sum of distances traveled or the latest time of arrival (makespan), therefore requires sophisticated geometric and algorithmic techniques.

\begin{figure}[b]%
    \centering%
    \includegraphics[page=1]{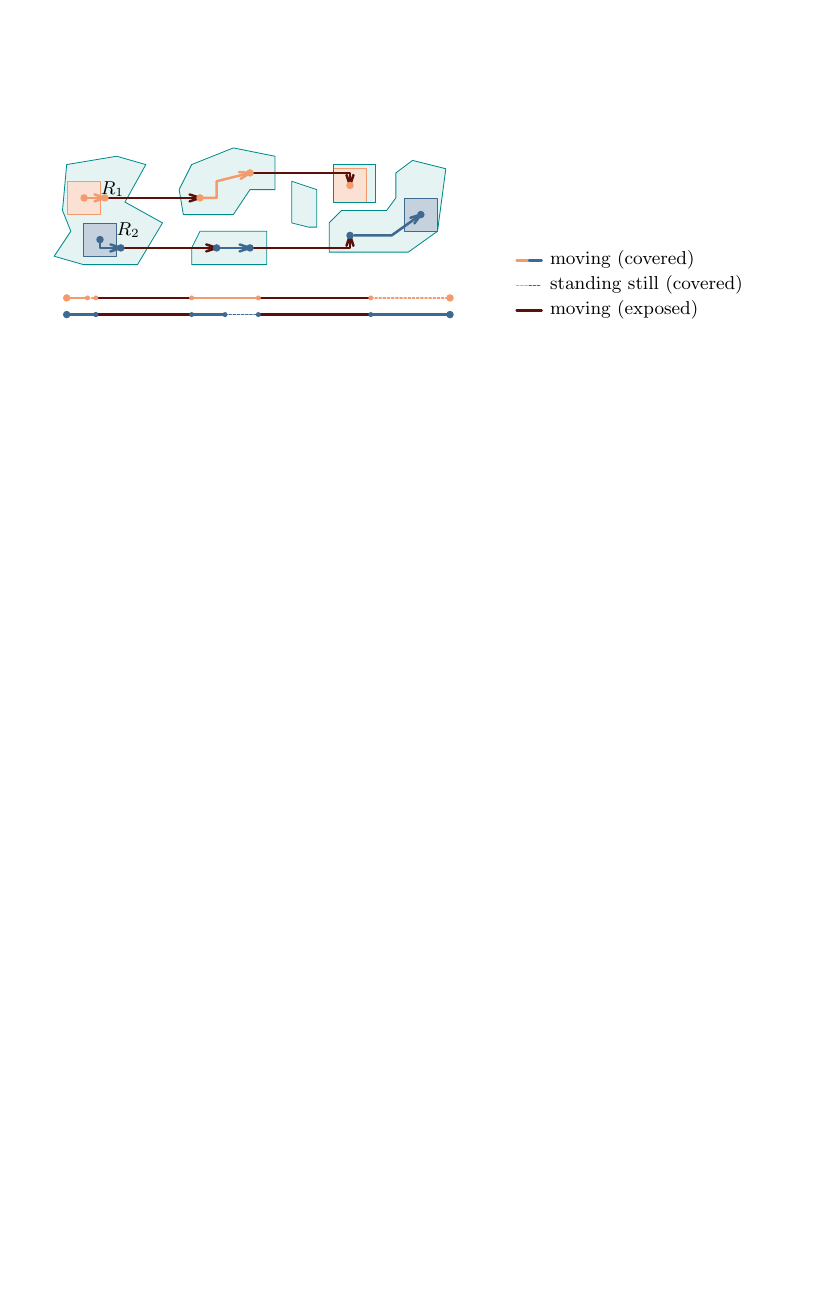}%
    \caption{
        An example instance and schedule, including a timeline, for \MECMP.
        While both robots are in a covered region (light blue), they can move to any position within this region for free. The robots are exposed (red) when at least one of the robots is not fully covered.
    }
    \label{fig:exposure-figure}
\end{figure}

In this paper, we consider a novel variant of continuous coordinated motion planning in which multiple agents must move to their respective target positions in the plane, while minimizing the total time during which at least one agent is \emph{exposed}.
In~general, \emph{cover} is provided by a finite set of bounded regions; an agent is covered exactly if it is fully contained in one of them, as illustrated in~\cref{fig:exposure-figure}. These bounded regions could model, for example, transmission radii in a wireless network, where agents want to stay in contact with each other as much as possible.
We call this the \MECMP problem.

Closely~related are the two classical objectives in motion planning problems of minimizing the total distance traveled (\textsc{Min-Sum}) and latest time of arrival (\textsc{Min-Makespan}); both are known to~be \NP-hard~\cite{demaine.fekete.keldenich.ea2018coordinated-motion,DemaineFKMS19,eiben.ganian.kanj2023parameterized,eiben.kanj.parsa2025motion,GeftH22}.
Recent work has therefore focused on optimal motion for just two robots, which remains a challenging task, particularly in the Euclidean plane.
While this has been characterized for various types of robots in the \textsc{Min-Sum} objective~\cite{esteban.halperin.silveira2025shortest-coordinated,kirkpatrick.liu2025minimum-length,kirkpatrick.liu2016characterizing-minimum-length}, no such result exists for latest time of arrival:
Although swapping two unit disks might seem straightforward, finding~optimal trajectories is an open problem~\cite{DemaineFKMS19}, demonstrating the complexity of seemingly trivial problems in  this setting.

A vast and growing body of research has concerned itself with the parameterized complexity of CMP.
A problem is \FPT in the parameter $k$ if there exists an algorithm with runtime $f(k)\cdot n^{\mathcal{O}(1)}$, where $f$ is a computable function and~$n$ is the input size (in our setting, the number of bits representing the coordinates).
\MSCMP is \FPT parameterized by the number of robots~$k$ for congruent disks on integer coordinates~\cite{eiben.kanj.parsa2025motion} and vertex robots in rectangular grid graphs, as well as \FPT parameterized by the objective~\cite{eiben.ganian.kanj2023parameterized}. Moreover, an \FPT algorithm parameterized by $k$ for rectangular robots in the plane is claimed in~\cite{eiben.kanj.parsa2025motion}, although no formal proof is given.
While \MMCMP is~\FPT parameterized by~$k$ in grid graphs, it remains \NP-hard even for constant makespan~\cite{eiben.ganian.kanj2023parameterized} if $k$ is part of~the~input.

\subsection{Our contributions}
\label{subsec:our-contributions}
We investigate multiple variants of coordinated motion planning for square robots in the plane under the $L_1$ metric.
In~\cref{sec:two-robots-cmp,sec:two-robots-in-a-polygonal-environment,sec:min-exposure-for-two-robots}, we give an $\mathcal{O}(n^4\log n)$ time algorithm to solve the \textsc{Min-Exposure} problem optimally for two robots among polygonal covering regions (possibly containing holes) with $n$ vertices, building on new insights for \textsc{Min-Sum} and \textsc{Min-Makespan} as key components.
In~\cref{sec:k-robots-plane}, we extend our results for \textsc{Min-Sum CMP} and \textsc{Min-Makespan CMP} (in the plane) to show that both are fixed-parameter tractable (\FPT) parameterized by the number of robots~$k$.
These results generalize and improve upon known tractability results from more restricted settings~\cite{eiben.ganian.kanj2023parameterized}.
In~\cref{sec:min-exposure-xp-algorithm}, we formulate an \XP (that is, runtime $\mathcal{O}(n^{g(k)})$ for some function~$g(k)$) algorithm that solves the \MECMP problem for arbitrarily many robots.
A summary of our results and  related work is shown in Table~\ref{tab:results}.

\begin{table}[htb]
    \centering%
\begin{tabular}{@{\extracolsep{\fill}} rcccc}
    \textbf{Objective} & \textbf{Robots} & \textbf{Workspace}  & \textbf{Runtime} & \textbf{Reference}\\%
    \hline\hline%
    \MSCMP & $2$ & rectilinear polygonal domain & $\mathcal{O}(n^4\log n)$ &\cite{agarwal.berg.holmgren.ea2025optimal-motion-planning}\\
     & $k$ & rectangular grid graph & \FPT in $k$ &\cite{eiben.ganian.kanj2023parameterized}\\
     & $k$ & $\mathbb{R}^2$ & \FPT in $k$ & \cref{thm:min-sum-fpt}\\
    \hline
    \MMCMP & $2$ & rectilinear polygonal domain & \NP-hard &\cite{agarwal.berg.holmgren.ea2025optimal-motion-planning}\\
    & $k$ & rectangular grid graph and $\mathbb{R}^2$ & \NP-hard &\cite{eiben.ganian.kanj2023parameterized} \\
     & $k$ & rectangular grid graph & \FPT in $k$ &\cite{eiben.ganian.kanj2023parameterized}\\
     & $k$ & $\mathbb{R}^2$ & \FPT in $k$ & \cref{thm:min-makespan-fpt}\\
    \hline
    \MECMP & $2$ & $\mathbb{R}^2$ & $\mathcal{O}(n^4 \log n)$ & \cref{thm:mecmp-for-two-robots}\\
     & $k$ & $\mathbb{R}^2$ & \XP in $k$ & \cref{thm:me-xp-algorithm}\\
     & $k$ & $\mathbb R^2$ & 
    \PSPACE-hard & \cite{abrahamsen.buchin.buchin.ea2025reconfiguration,Hopcroft.ea.warehouseman,SoloveyH16}\\
    \hline
\end{tabular}
\caption{Our results and related work on the studied variants of \CMP.}
	\label{tab:results}
\end{table}

\subsection{Related work}
\label{subsec:related-work}

Research on coordinated motion planning and multi-agent path finding~(MAPF) is extensive, making a comprehensive overview infeasible in this work. 
In addition to the work mentioned in the introduction, we give a brief overview of literature most relevant to our contributions and refer readers to existing surveys and books for broader perspectives~\cite{abrahamsen2024problemsgeobotics,AntonyshynSGM23,halperin2017robotics,halperin2017algorithmic,DBLP:books/cu/L2006,DBLP:journals/cacm/Salzman19,solovey2020complexityplanning,stern2019multi}.

\subparagraph{Min-Sum.}
The \MSCMP problem in the Euclidean plane (without obstacles) is very well-understood for two convex, centrally-symmetric robots~\cite{esteban.halperin.silveira2025shortest-coordinated,kirkpatrick.liu2016characterizing-minimum-length,kirkpatrick.liu2025minimum-length}.
For two square robots, an $\mathcal{O}(n^4 \log n)$ time algorithm~\cite{agarwal.berg.holmgren.ea2025optimal-motion-planning} was recently presented for solving \MSCMP optimally in a rectilinear polygonal domain using the $L_1$-distance, and under the Euclidean metric, polynomial-time approximation schemes have been proposed first for two~\cite{argawal.halperin.sharir.steiger2024near-optimal}, and more recently constantly many robots in a polygonal domain~\cite{agarwal2026nearoptimalminsummotionplanning}.

\subparagraph{Min-Makespan.} Minimizing the overall completion time is considerably more difficult than minimizing the total distance traveled.
When the number of robots $k$ is part of the input, the problem is \NP-hard in the plane~\cite{demaine.fekete.keldenich.ea2018coordinated-motion,DemaineFKMS19}, even for constant makespan~\cite{eiben.ganian.kanj2023parameterized}. 
More recently, \NP-hardness has also been shown for the case of just two robots in a rectilinear polygon~\cite{agarwal.berg.holmgren.ea2025optimal-motion-planning}.

\subparagraph{Feasibility.} It has been shown across various models and problem variants where the robots move among obstacles, that deciding whether a feasible reconfiguration schedule exists is \PSPACE-hard when the number of robots is included as part of the input, even in simple settings~\cite{abrahamsen.buchin.buchin.ea2025reconfiguration,Hopcroft.ea.warehouseman,SoloveyH16}; however, there exist $\mathcal{O}(n^k)$ algorithms for $k$ robots~\cite{ramanathan.alagar1985algorithmicmotionplanning} in a polygonal domain defined over $n$ vertices.

\subparagraph{Weighted regions.}
Weighted regions in the plane provide a framework for modeling spaces where movement incurs different costs depending on the area being traversed; the problem is then to find a shortest path in the plane of minimum cost according to the \emph{weighted}~Euclidean metric~\cite{mitchell2017geometric,mitchell.papadimitriou1991weighted-region}.
Even very restricted cases are unsolvable within the Algebraic Computation Model over the Rational Numbers~\cite{BergESS24,DECARUFEL2014724}.
There is some previous work on the $\{0,1,\infty\}$-weighted region problem~\cite{gewali.meng.mitchell.ea1988path-planning,gudmundsson_et_al:LIPIcs.WADS.2025.33}.
Note that $\{0,1\}$-weighted regions can model minimum exposure for a single, but not multiple, robots.


\section{Preliminaries}
\label{sec:preliminaries}

We denote the set $\{1,\ldots, n\}$ by $[n]$ for $n\in\mathbb{N}^+$.
For a set $S$ and $i\in\mathbb{N}^+$, the set $S^i$~refers~to~the $i$th Cartesian power of $S$, i.e., $S\times \ldots \times S$.
For a point $p\in\mathbb{R}^2$, we denote by $x(p)$ and $y(p)$ its $x$-~and $y$-coordinate.
Unless otherwise stated, we measure distances using the $L_1$-norm defined as $\norm{p}=\abs{x(p)}+\abs{y(p)}$.
In some cases, we make use of the $L_\infty$-norm $\norm{p}_\infty=\max( \{\abs{x(p)},\abs{y(p)} \})$.

\subsection{Coordinated motion in the plane}
\label{subsec:preliminaries-cmp}

We study the coordinated motion of $k$ robots~${R_1,\ldots,R_k}$ in the plane.
Each robot uniquely occupies a subset of the plane that corresponds to a unit square translated by some vector.
We denote by ${\unitsquare=\{\,p\in \mathbb{R}^2\mid \norm{p}_\infty \leq \nicefrac{1}{2}\}}$ the unit square at the origin.
A~\emph{configuration} of~${R_1,\ldots,R_k}$ is an ordered sequence of positions $P=(p_1,\ldots, p_k)$ with~$p_i\in\mathbb{R}^2$, such that the robots are pairwise interior disjoint, i.e., ${\norm{p_i-p_j}_\infty \geq 1}$ for $i\neq j \in [k]$, see Figure~\ref{fig:intro-robots}(a) and (b).
Let~$\mathcal{F}_k\subseteq \mathbb{R}^{2\times k}$ refer to the \emph{configuration space}, consisting of all configurations of $k$ robots.

\begin{figure}[b]%
    \captionsetup[subfigure]{justification=centering}%
    \begin{subfigure}[t]{0.25\columnwidth-0.5em}%
        \centering%
        \includegraphics[page=1]{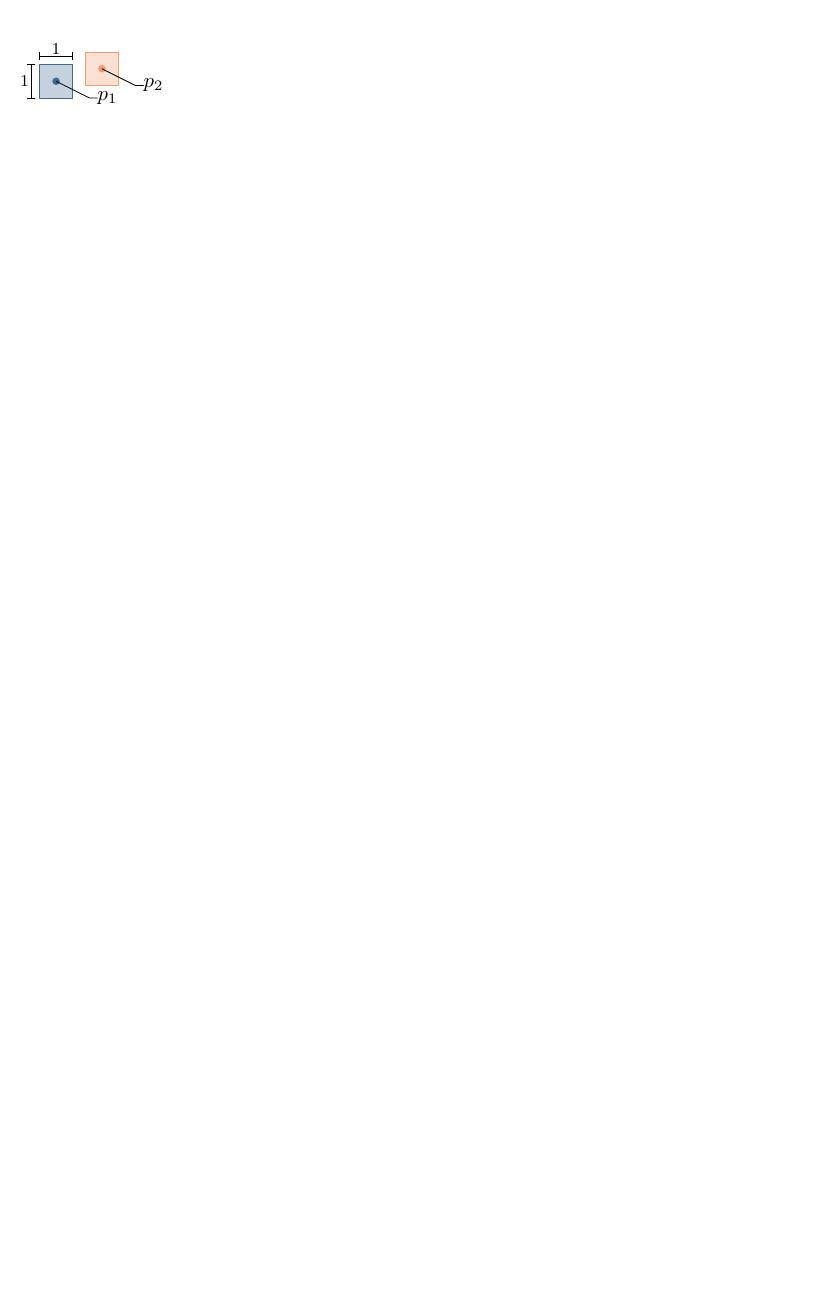}%
        \subcaption{}%
        \label{fig:intro-placement-valid}%
    \end{subfigure}%
    \hfill%
    \begin{subfigure}[t]{0.25\columnwidth-0.5em}%
        \centering%
        \includegraphics[page=2]{intro-robots}%
        \subcaption{}%
        \label{fig:intro-placement-invalid}%
    \end{subfigure}%
    \hfill%
    \begin{subfigure}[t]{0.5\columnwidth-0.5em}%
        \centering%
        \includegraphics[page=3]{intro-robots}%
        \subcaption{}%
        \label{fig:intro-trajectory}%
    \end{subfigure}
    \caption{Configurations are (a) feasible or (b) infeasible. (c)~A trajectory~$m$ from $a$ to $b$ with four turns.}
    \label{fig:intro-robots}%
\end{figure}

Robots move along polygonal \emph{trajectories} at speed (w.r.t. the $L_1$ distance) up to $1$.
A trajectory from~$a\in\mathbb{R}^2$ to~$b\in\mathbb{R}^2$ over the time interval $T=[t_0, t_1]\subset \mathbb{R}$ is a metric map $m\colon T \rightarrow \mathbb{R}^2$, with~${m(t_0) = a}$ and~$m(t_1) = b$ whose image $m[T]$ is a polygonal chain, see~\cref{fig:intro-trajectory}.
A~\emph{turn} in $m$ is then any point on its image where two segments with different orientations meet.
Note that we do not require changes in velocity to be continuous.
In fact, our schedules only move robots at speed $0$ or $1$, and we predominantly use rectilinear trajectories, which is sufficient for optimal reconfiguration in the plane (without covering regions) under the $L_1$-distance.
An ordered set of trajectories~${M=(m_1,\ldots, m_k)}$ describes a \emph{schedule} over a time interval $T$; with a slight abuse of notation, we write~${M(t)=(m_1(t),\ldots,m_k(t))}$ for $t\in T$.
A schedule is then \emph{feasible} exactly if~$M(t)$ is feasible for every $t\in T$.
Clearly, if $M$ is feasible over $T=[t_0,t_1]$, then its \emph{reverse},~${t\mapsto M(t_1-t+t_0)}$, is also feasible over the same interval.

\subparagraph{Objectives.}
Let $M$ be a feasible schedule for $k$ robots over the interval $T=[t_0,t_1]$.
Its \emph{makespan} is $\makespan{M}=\abs{t_1-t_0}$, and its \emph{total traveled length}~$\distSum{M}$ refers to the sum of the polygonal chains' length.

\subparagraph{Problem statement.}
In general, the \CMP problem (CMP) takes two feasible configurations $A,B\in\mathcal{F}_k$ and asks for a schedule $M$ from $A$ to $B$ that minimizes a given objective function.
The \MMCMP problem asks for the smallest $\ell$ such that there exists a schedule $M$ with makespan at most $\ell$, i.e., $\makespan{M}\leq\ell$.
Similarly, the \MSCMP asks a for a schedule $M$ with traveled length $\distSum{M}\leq\ell$.

Natural lower bounds for both arise from the geometry of the configurations.
For~makespan, the bound is the diameter of the configuration ${\diam{A,B}=\max_{i\in[k]}\norm{a_i - b_i}}$, i.e., the maximum distance any robot must travel.
For total traveled length, the bound is the sum of distances, i.e., the aggregate travel required across all robots $\minSum{A,B}=\sum_{i\in[k]}\norm{a_i - b_i}$.

\subsection{Feasibility of motion in polygonal environments}
\label{subsec:preliminaries-feasibility}

In our algorithm for solving a \MECMP problem, we encounter the \CMP problem in polygonal environments, i.e. among polygonal obstacles, as a subproblem. We thus extend our notation to this setting.
Let $S$ be a polygonal domain with $n$ vertices $V(S)$, i.e. an outer (simple) polygon that may contain one or more polygonal holes.
Each boundary cycle, outer or around a hole, is described by an ordered list of vertices.
A~feasible configuration of $k$ robots $P\in\mathcal{F}_k$ is \emph{in $S$} exactly if the interior of each robot is fully contained in~$S$, as shown in~\cref{fig:inner-minkowski}.
We denote these configurations by the set~$\mathcal{F}_k[S]\subseteq\mathcal{F}_{k}$.
Containment of a robot within $S$ can be determined by the \emph{inner Minkowski~sum} of~$S$ and the unit square $\unitsquare$, $\inner(S)$.
Intuitively, this subtracts half the size of a robot from the interior of the polygon along its boundary $\partial S$, see~\cref{fig:inner-minkowski}.
Formally, this corresponds~to:

\begin{definition}
    \label{def:inner-minkowski}
    Let $\partial S$ be the boundary of $S$, i.e., the set of points on edges of $S$.
    The inner Minkowski sum of $S$, with respect to the unit square~$\unitsquare$, is defined as
    $\inner(S) \coloneqq \{x \in S \mid \min_{y\in \partial S} (\norm{y-x}_\infty )\geq \nicefrac{1}{2}\}.$
\end{definition}

\begin{figure}[tb]%
    \captionsetup[subfigure]{justification=centering}%
    \begin{subfigure}[t]{0.55\columnwidth - 0.5em}%
        \centering%
        \includegraphics[page=1]{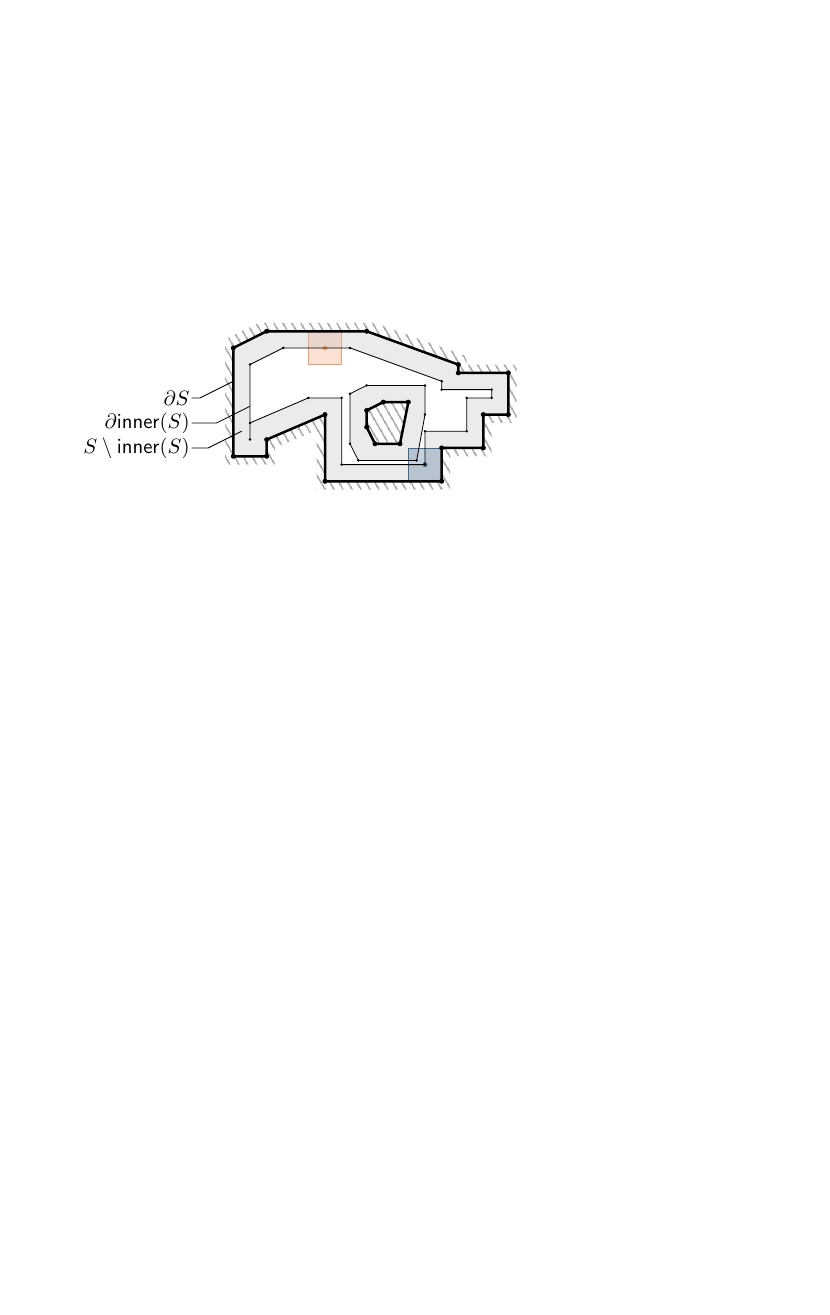}%
        \subcaption{}%
    \end{subfigure}%
    \hfill%
    \begin{subfigure}[t]{0.45\columnwidth - 0.5em}%
        \centering%
        \includegraphics[page=3]{inner-minkowski}%
        \subcaption{}%
    \end{subfigure}%
    \caption{(a) A polygonal domain $S$, its inner Minkowski sum $\inner(S)$, and a feasible configuration of two robots in $S$. (b) The horizontal decomposition of $\inner(S)$.}
    \label{fig:inner-minkowski}%
\end{figure}%

The \textsc{Feasibility} problem takes a polygonal domain $S$ and a pair of configurations~$A,B\in\mathcal{F}[S]$ in~$S$ and asks whether there exists a schedule from $A$ to $B$ such that every intermediate configuration of the robots is feasible and in $S$.
A common tool for such problems are \emph{trapezoidal decompositions}~\cite{chazelle1987approximation,DBLP:books/cu/L2006}.
These divide a polygonal domain into trapezoids, by casting either \emph{vertical} or \emph{horizontal} rays from every vertex and cutting the polygon along the resulting line segments, see~\cref{fig:inner-minkowski}(b).

\subsection{Minimum Exposure Motion Planning}
\label{subsec:preliminaries-min-exposure}

Finally, we introduce a novel problem variant, \MECMP.
An instance of this problem consists of two configurations of $k$ robots in the plane and a set of polygonal domains~${\mathcal{S}=\{S_1,\ldots, S_m\}}$ with a total number of $n$ vertices, that provide \emph{cover} but do not restrict movement, see~\cref{fig:exposure-figure}. 

A configuration $P\in\mathcal{F}_k$ is \emph{covered} if every robot is fully contained in a polygonal domain in $\mathcal{S}$, and \emph{exposed} otherwise.
We denote this by $\exposure{P, \mathcal{S}}$, which is equal to $1$ exactly if $P$ is exposed, and~$0$ otherwise.
The \emph{exposure} $\exposure{M,\mathcal{S}}$ of a schedule $M$ over $T=[t_0,t_1]$ is then defined as the total time during which the robots are in an exposed configuration, i.e. for each time interval that the robots are exposed, we add the maximum length traveled by any robot during this interval.
Finally, the \MECMP problems asks for the smallest $\ell$ such that there exists a schedule $M$ from $A$ to $B$ that satisfies $\exposure{M,\mathcal{S}}\leq \ell$.


\section{Coordinated motion planning for two robots}
\label{sec:two-robots-cmp}
In this section, we show how to compute optimal schedules for $k=2$ unit square robots in the plane in both the \MMCMP and \MSCMP problem variants.
Central to our approach is the discretization of the configuration space using a \emph{transition graph}.
We~give a formal description of this graph here, and generalize to any fixed $k$ in~\cref{sec:k-robots-plane}.

\subparagraph{Transition graph.}
Recall that $\mathcal{F}_2$ refers to the configuration space of two square robots in the plane.
By definition, every feasible configuration in $\mathcal{F}_2$ permits an axis-parallel bisection into two half-planes such that the interior of each robot lies in exactly one of the two.
We leverage this fact to cover the configuration space by four \emph{orderings} that represent the relative position of the two robots along either axis of the plane as shown in~\cref{fig:ordering-diagram-k-2}.

\begin{figure}[tb]%
    \begin{subfigure}[t]{0.5\textwidth -0.5em}%
        \centering%
        \includegraphics[page=1]{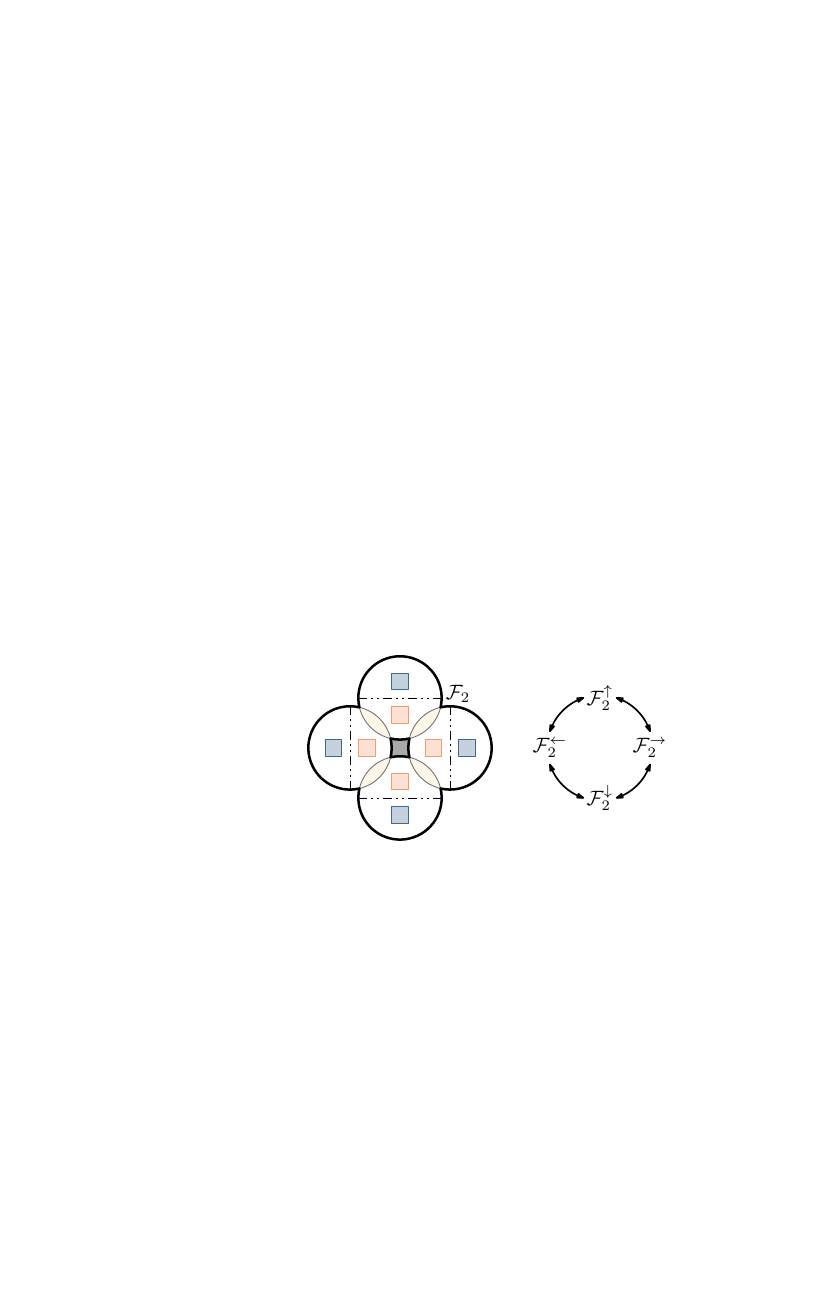}%
        \subcaption{The four possible orderings within $\mathcal{F}_2$ and the transition graph induced by their intersections.}%
        \label{fig:ordering-graph}%
    \end{subfigure}%
    \hfill%
    \begin{subfigure}[t]{0.5\textwidth -0.5em}%
        \centering%
        \includegraphics[page=2]{transition-graph-two}%
        \subcaption{The intersections of orderings consist of diagonal configurations, in which orthogonal separators exist.}%
        \label{fig:ordering-diagram-diagonal}%
    \end{subfigure}%
    \caption{We divide the configuration space into ordering groups based on the order of robots wrt. axis-aligned bisectors (dashed) that separate them. Shown here is $\mathcal{F}_2$, split into four orderings.}%
    \label{fig:ordering-diagram-k-2}
\end{figure}

The \emph{ordering} $\mathcal{F}_2^\rightarrow$ contains all feasible configurations in which robot $R_1$ is located \textit{``to the right of''} robot $R_2$, i.e., $\mathcal{F}_2^\rightarrow\coloneqq \{\,(p_1,p_2)\in\mathcal{F}_2 \mid x(p_1)\geq x(p_2)+1\,\}$.
The opposite and orthogonal orderings are defined analogously, see~\cref{fig:ordering-graph}.
If two configurations exist in the same ordering of $\mathcal{F}_2$, we speak of \emph{commonly ordered} configurations.
Clearly, it is possible for a robot to be located \textit{``to the right of''} and \textit{``above''} another, implying that there exist configurations that are both in $\mathcal{F}_2^{\rightarrow}$ and~$\mathcal{F}_2^{\uparrow}$, as shown in~\cref{fig:ordering-diagram-diagonal}.
We refer to these configurations in the intersection of orderings as \emph{diagonal}.

Finally, we define the \emph{transition graph} as the intersection graph of the orderings covering the space.
For the case of two robots, this transition graph is a four-cycle.

In the remainder of this section, we show that finding an optimal schedule for both the \MMCMP and \MSCMP problems can be reduced to enumerating paths in the transition graph that can then be mapped to sequences of optimal sub-trajectories.

\subsection{Minimum makespan}
\label{subsec:two-robots-min-makespan}
As a first step, we show how to solve instances of commonly ordered configurations optimally, and that the resulting schedule's makespan matches the trivial lower bound, i.e., the~diameter. Recall from Section~\ref{sec:preliminaries} that the diameter $d(A,B) = \max(||a_1-b_1||,||a_2-b_2||)$ is the maximum distance any robot must travel. We actually provide a more general schedule for any makespan greater or equal to the diameter, i.e., $\makespan{M} \geq d(A,B)$. This general schedule is a crucial building block for the $k$ robots case.

\begin{lemma}
    \label{lem:two-robots-min-makespan-same-ordering}
    Let $A,B\in\mathcal{F}_2$ be commonly ordered configurations.
    For any fixed~${d \geq \diam{A,B}}$, there is a schedule $M$ from $A$ to $B$ with~$\makespan{M}=d$ in which each robot makes at~most~one~turn.
\end{lemma}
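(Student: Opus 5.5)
The plan is to exploit that, within a single ordering, feasibility depends on only one coordinate, and that the corresponding separation constraint is preserved under convex combinations. By symmetry (reflecting the plane and/or relabelling the robots), assume $A,B\in\mathcal{F}_2^\rightarrow$, that is, $x(a_1)\ge x(a_2)+1$ and $x(b_1)\ge x(b_2)+1$. Every configuration $(p_1,p_2)$ with $x(p_1)\ge x(p_2)+1$ already satisfies $\norm{p_1-p_2}_\infty\ge 1$, so it is feasible. Hence it suffices to build a schedule over $[0,d]$ that keeps this single linear inequality true at every time.

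\textbf{Primary construction (zero turns).} Let each robot move along the straight segment from $a_i$ to $b_i$ at constant velocity, $m_i(t)=a_i+\tfrac{t}{d}(b_i-a_i)$ for $t\in[0,d]$ (if $d=0$, then $A=B$ and there is nothing to do). The steps are as follows.
\begin{enumerate}
\item Speed: the $L_1$ speed of robot $i$ is $\norm{b_i-a_i}/d\le \diam{A,B}/d\le 1$, so the motion is admissible and $\makespan{M}=d$ exactly.
\item Feasibility: with $\lambda=t/d\in[0,1]$,
\[
x(m_1(t))-x(m_2(t))=(1-\lambda)\bigl(x(a_1)-x(a_2)\bigr)+\lambda\bigl(x(b_1)-x(b_2)\bigr)\ge 1 .
\]
This is a convex combination of two quantities that are each at least $1$, so $M(t)\in\mathcal{F}_2^\rightarrow\subseteq\mathcal{F}_2$ for all $t$.
\end{enumerate}
Each trajectory is a single segment, so it has no turn, which is within the claimed ``at most one turn''.

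\textbf{Rectilinear variant (the main obstacle).} The paper predominantly uses rectilinear trajectories with speed $0$ or $1$, so I would also give a variant in that form, and I expect this to be the real difficulty. The natural choice is to let each robot first move along the axis of the ordering (here horizontally) and then along the other axis, giving one turn per robot. The danger is that if the two horizontal moves are performed independently at unit speed with different durations, the difference $x(m_1)-x(m_2)$ is only piecewise linear and may dip below $1$ partway through. Vertical moves, by contrast, never affect feasibility in $\mathcal{F}_2^\rightarrow$.

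I would try to resolve this by synchronizing the horizontal phases. If both robots move left, or both move right, run the horizontal moves over a common interval with linearly interpolated $x$-coordinates, so that the convex-combination argument above applies verbatim. If the robots move horizontally in opposite directions, they can only be moving apart or toward a final separation that is itself at least $1$. In that case I would order the two horizontal moves so that the robot increasing the gap moves first, which keeps the gap monotone until it reaches its final value.

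The remaining work is bookkeeping: check that each robot's horizontal and vertical travel fits within $d$, which holds because $\norm{a_i-b_i}\le d$, and pad with waiting time so that the makespan is exactly $d$. Should this bookkeeping fail, the straight-line construction above already proves the statement as written.
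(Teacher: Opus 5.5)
Your straight-line construction is correct and proves the lemma as stated. The ordering $\mathcal{F}_2^\rightarrow$ is cut out by the single linear inequality $x(p_1)-x(p_2)\ge 1$, so it is convex in configuration space. Linear interpolation from $A$ to $B$ over $[0,d]$ therefore never leaves it. The $L_1$ speed $\norm{b_i-a_i}/d\le 1$ is admissible, because the model only caps speed at $1$.

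This is a genuinely different route from the paper's. The paper uses a rectilinear schedule: every robot starts its horizontal leg at time $0$ at unit speed, waits, and finishes its vertical leg at time $d$ at unit speed. For a horizontally ordered pair, the gap is constant or monotone while both robots move horizontally, and then monotone towards its final value. A vertically ordered pair is handled by the same argument applied to the time-reversed schedule.

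Your version has several advantages:
\begin{itemize}
\item it is shorter and needs no case split;
\item it gives zero turns;
\item it has total length $\minSum{A,B}$, so \cref{lem:two-robots-min-sum-same-ordering} follows as well;
\item it extends verbatim to $k$ robots (\cref{lem:k-robots-min-makespan-same-ordering}) and to rectangles, since each $m_i$ depends only on $a_i,b_i,d$ and every ordering is an intersection of half-spaces.
\end{itemize}
What the paper's version buys is axis-parallel motion at speeds $0$ and $1$ only. The paper needs exactly that when it transfers the lemma, and the resulting turn bound, to the integer grid, where diagonal motion at fractional speed is unavailable.

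Your rectilinear variant, however, should be dropped or repaired. First, the danger you describe does not arise. If both horizontal legs start at time $0$ at unit speed, each $x$-coordinate is monotone, so the gap $x(m_1)-x(m_2)$ never drops below $\min\{x(a_1)-x(a_2),\,x(b_1)-x(b_2)\}\ge 1$. This is exactly the paper's argument.

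Second, your fix breaks the time budget. Synchronizing the horizontal legs forces them to last $\max_j\abs{x(b_j)-x(a_j)}$, after which $R_i$ still needs $\abs{y(b_i)-y(a_i)}$. For $a_1=(1,0)$, $b_1=(11,0)$, $a_2=(0,0)$, $b_2=(1,9)$ we have $\diam{A,B}=10$, yet $R_2$ would need $10+9=19$. So the claim that the bookkeeping ``holds because $\norm{a_i-b_i}\le d$'' is false.

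Third, choosing the axis order by the orientation of the ordering would not generalize to $k$ robots. There, one robot can be horizontally separated from one partner and vertically from another. The paper's single $x$-then-$y$ schedule, combined with time reversal, is designed to avoid exactly this.
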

\begin{proof}
    We provide a schedule $M=(m_1,m_2)$ over the time interval $[0,d]$, where the trajectory $m_i$ of robot $R_i$, $i \in\{1,2\}$ is defined in three steps:
    \begin{enumerate}
        \item Move along the $x$-axis toward $x(b_i)$ at unit speed.
        \hfill$\abs{x(b_i)-x(a_i)}$
        \item Wait there until the total elapsed time is $d-\abs{y(b_i)-y(a_i)}$.
        \hfill$d - \norm{b_i - a_i}$
        \item Move along the $y$-axis toward $y(b_i)$ at unit speed.
        \hfill$\abs{y(b_i)-y(a_i)}$
    \end{enumerate}
    It is trivially true that the makespan of this schedule is exactly $d$; for clarity, each step has its respective makespan indicated.
    It remains to argue that the schedule is feasible. 
    We distinguish two cases based on the orientation of separating bisectors that exist in $A$~and~$B$.

    For case (i), assume that either~$A,B\in\mathcal{F}_2^\rightarrow$ or $A,B\in\mathcal{F}_2^\leftarrow$, and let $t\in [0,d]$ refer to the first moment in time at which one of the two robots has completed step one.
    In the time interval $[0,t]$, the relative position of the two robots does not change; both moved along the $x$-axis at identical speed.
    It follows that the configuration $M(t)$ is commonly ordered with~$A$ and $B$, and that any collision would have to occur in the remaining time~$(t,d]$.
    Assume without loss of generality that $R_1$ completed step one first.
    Then its $x$-coordinate will not change at all during $(t,d\,]$, while the $x$-coordinate of $R_2$ may still change.
    However, $R_2$ is guaranteed to occupy a disjoint interval along the $x$-axis at all times while finishing step one:
    As it moves exclusively towards its destination coordinate, a common ordering existing in the start and target configuration would be contradicted otherwise.
    Since its position along the $x$-axis does not change throughout steps two or three, we conclude that the schedule is collision-free and feasible.

    For case (ii), assume that either~$A,B\in\mathcal{F}_2^\uparrow$ or $A,B\in\mathcal{F}_2^\downarrow$, and let $M$ refer to a schedule derived as above.
    We argue analogously, but based on the schedule's reverse, i.e., if the two robots first moved along the $y$-axis and then along the $x$-axis.
    Using the exact argument from above, no collision can occur prior to a robot completing step one, and all remaining movement occurs within disjoint $y$-intervals.
    As feasibility of the reverse implies feasibility of the original schedule, this concludes the proof.
\end{proof}

By triangle inequality, we can now immediately derive that there always exists an optimal schedule that does not repeat any ordering:
If a schedule exits and re-enters an ordering, we can replace the schedule during this time interval using a schedule computed according to~\cref{lem:two-robots-min-makespan-same-ordering}, which has optimal makespan.
The resulting schedule will thus never have a greater makespan than the original, and does not exit and re-enter the ordering.

\begin{corollary}
    For any instance of \MMCMP, there exists an optimal schedule that does not repeat an ordering, i.e., it will follow a simple path in the transition graph.
    \label{cor:simple-paths-for-two}
\end{corollary}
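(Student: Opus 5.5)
Starting from an arbitrary optimal schedule $M$ over $[t_0,t_1]$, I will repeatedly apply a replacement argument until the sequence of visited orderings forms a simple path in the transition graph.

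\textbf{Discretising the schedule.} Since every configuration of $\mathcal{F}_2$ lies in at least one of the four orderings, and each ordering is a closed subset of $\mathcal{F}_2$, I record for $M$ a sequence of orderings $O_1,\dots,O_r$ with times $t_0=s_0\le s_1\le\dots\le s_r=t_1$ such that $M(t)\in O_j$ for $t\in[s_{j-1},s_j]$. Consecutive orderings must intersect, because $M(s_j)\in O_j\cap O_{j+1}$; hence the sequence is a walk in the transition graph (the four-cycle). Here I will choose a minimal such description. Polygonal trajectories have finitely many segments, and the semialgebraic sets involved are simple, so such a finite description exists.

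\textbf{Shortcutting a repetition.} Suppose some ordering occurs twice, say $O_i=O_j$ with $i<j$. Then $M(s_{i-1})$ and $M(s_j)$ both lie in $O_i$ and are therefore commonly ordered. I replace $M$ on $[s_{i-1},s_j]$ by the schedule from \cref{lem:two-robots-min-makespan-same-ordering}, taking $d=s_j-s_{i-1}$. This choice is admissible because each robot travels at speed at most $1$ in the $L_1$ metric, so
\[
\|m_\ell(s_j)-m_\ell(s_{i-1})\|\le s_j-s_{i-1}
\]
holds for each $\ell$, giving $d\ge \diam{M(s_{i-1}),M(s_j)}$. The lemma's schedule has makespan exactly $d$, so the spliced schedule is still feasible and has the same total makespan, and thus stays optimal. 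The spliced schedule stays in $O_i$ throughout $[s_{i-1},s_j]$: in case (i) of the lemma's proof the robots keep disjoint $x$-intervals in the same order, and case (ii) is symmetric. Its ordering sequence is therefore $O_1,\dots,O_{i-1},O_i,O_{j+1},\dots,O_r$, which is strictly shorter.

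\textbf{Termination and the main obstacle.} Each replacement strictly decreases $r$, so after finitely many steps no ordering repeats, and the walk is a simple path. The step I expect to need the most care is verifying that the replaced segment really remains inside the single ordering $O_i$ rather than merely being feasible. Without this, the shortcut might not reduce the walk. To settle it, I will re-read the feasibility argument of \cref{lem:two-robots-min-makespan-same-ordering}, which in fact shows that the separating axis order is preserved at all times. A secondary technicality is the existence of the finite decomposition, which follows from piecewise linearity of the trajectories.
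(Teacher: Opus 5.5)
Your proposal is correct and follows the paper's own argument: an exit from and re-entry into an ordering is shortcut by the one-turn schedule of \cref{lem:two-robots-min-makespan-same-ordering}, which by the speed bound (the paper's ``triangle inequality'') adds no makespan, and your choice $d=s_j-s_{i-1}$ together with the check that the replacement stays inside $O_i$ spells out what the paper leaves implicit. The one soft spot is the finiteness of the ordering sequence: the model only requires the \emph{image} of a trajectory to be polygonal, so a $1$-Lipschitz parametrization can switch orderings infinitely often. You can sidestep this by shortcutting greedily, replacing the schedule from the current time up to the \emph{last} time it lies in the current ordering (a closed condition), which needs at most four replacements.
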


Knowing that there is an optimal schedule that is a simple path in the transition graph, we can enumerate the relevant (constantly many) simple paths in the transition graph and compute the best possible schedule for each using a constant-size linear program.

\begin{theorem}
    \label{thm:mmcmp-two-robots}
    \MMCMP can be solved in $\mathcal{O}(1)$ time for two~robots.
\end{theorem}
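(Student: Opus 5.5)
By \cref{cor:simple-paths-for-two}, some optimal schedule visits the orderings along a simple path $\mathcal{O}_1,\ldots,\mathcal{O}_r$ of the transition graph. Since the graph is a four-cycle, $r\le 4$, and each path starts at an ordering containing $A$ and ends at one containing $B$, there are only constantly many candidate paths. For a fixed path, the schedule passes through diagonal configurations $C_1,\ldots,C_{r-1}$ with $C_j\in\mathcal{O}_j\cap\mathcal{O}_{j+1}$. Set $C_0=A$ and $C_r=B$. Consecutive configurations $C_{j-1},C_j$ are then commonly ordered (both lie in $\mathcal{O}_j$). By \cref{lem:two-robots-min-makespan-same-ordering}, a sub-schedule from $C_{j-1}$ to $C_j$ with makespan exactly $\diam{C_{j-1},C_j}$ exists. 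This matches the lower bound, so it is optimal for that leg.

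Concatenating the legs gives a feasible schedule of makespan $\sum_{j=1}^{r}\diam{C_{j-1},C_j}$. Conversely, any schedule following this path is cut by the transition times into legs, each of makespan at least the diameter of its endpoints. So for a fixed path the optimum equals
\[
\min_{C_1,\ldots,C_{r-1}} \sum_{j=1}^{r} \max_{i\in\{1,2\}} \norm{c_{j-1,i}-c_{j,i}},
\]
subject to $C_j\in\mathcal{O}_j\cap\mathcal{O}_{j+1}$.

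I would then write this as a linear program of constant size. The variables are the coordinates of the at most three intermediate configurations, together with auxiliary variables for the absolute values, for the $L_1$ norms, and one variable $z_j$ per leg bounding both robots' distances. Each diagonal set is a conjunction of two linear inequalities; for example, $\mathcal{F}_2^\rightarrow\cap\mathcal{F}_2^\uparrow$ requires $x(p_1)\ge x(p_2)+1$ and $y(p_1)\ge y(p_2)+1$. Membership of $C_j$ in these sets is linear, so feasibility is automatic. The $L_1$ distances and the maximum are handled by the standard linearization $u\ge \pm(\cdot)$, $z_j\ge u_{j,1}+v_{j,1}$, and so on, and the objective is to minimize $\sum_j z_j$. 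Each LP has $O(1)$ variables and constraints, so it is solved in $O(1)$ time, for instance by enumerating vertices. Taking the minimum over the constantly many paths, including the trivial path when $A,B$ are commonly ordered, gives the optimum.

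The main obstacle is justifying that the reduction is exact, namely that an optimal schedule really decomposes into legs whose endpoints lie in the diagonal intersections. A schedule following path $\mathcal{O}_1,\ldots,\mathcal{O}_r$ leaves $\mathcal{O}_j$ only through a configuration in some other ordering. The orderings are closed sets covering $\mathcal{F}_2$, so at the last time it is in $\mathcal{O}_j$ before entering $\mathcal{O}_{j+1}$, the configuration lies in $\mathcal{O}_j\cap\mathcal{O}_{j+1}$. The two opposite orderings are disjoint, which is why every transition is along an edge of the four-cycle. Choosing these transition instants as the $C_j$ shows that the LP value is at most the true optimum, which closes the argument.
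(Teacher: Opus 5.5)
Your proposal is correct and follows the paper's proof. Like the paper, you enumerate the constantly many simple paths of the four-cycle transition graph and, for each, solve a constant-size LP over diagonal intermediate configurations that minimizes the sum of per-leg diameters, with each leg realized by \cref{lem:two-robots-min-makespan-same-ordering}. Your per-leg variables $z_j$ are an equivalent linearization of the paper's cross-sum constraints, and your closedness argument for locating the transition configurations makes explicit a step the paper leaves implicit.
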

\begin{proof}
    Our approach is to define a procedure that maps a simple path in the transition graph to a shortest feasible schedule that visits the corresponding orderings in the given order.
    By~\cref{lem:two-robots-min-makespan-same-ordering,cor:simple-paths-for-two}, it suffices to define this procedure for paths of length one, two, and three; a path of length four is non-simple and paths of length zero are already covered.
    All relevant paths in the transition graph can be enumerated, e.g., by depth-first search.
    We give an exact description of the procedure for paths of length one; the remaining cases can be covered by a straightforward extension of the method.

    Let $A,B\in \mathcal{F}_2$ be two configurations that are contained in two intersecting orderings.
    Assume that these are $\mathcal{F}_2^\uparrow$ and $\mathcal{F}_2^\rightarrow$, all other cases are symmetric.
    Clearly, every schedule that goes directly from one ordering to the other must pass through their intersection in the form of a diagonal configuration $I\in\mathcal{F}_2^\uparrow\cap\mathcal{F}_2^\rightarrow$.
    Due to~\cref{lem:two-robots-min-makespan-same-ordering}, we can then compute a schedule of makespan $\diam{A,I}+\diam{I,B}\geq \diam{A,B}$, and thus reduce our problem to determining a choice of $I$ that minimizes the overall cost.
    This problem can be expressed as a linear program.
    Note that each of the $\norm{\cdot}$ terms can be transformed into a linear equation with a constant number of additional variables for each $\abs{\cdot}$ term.
    \begin{align*}
        \text{minimize}\qquad\phi & \\
        \text{subject to}\qquad
        \phi & \geq \norm{i_1-a_1} + \norm{i_1-b_1} &&\texttt{(candidates for the makespan)}\\
        \phi & \geq \norm{i_1-a_1} + \norm{i_2-b_2} &&\texttt{\ldots}\\
        \phi & \geq \norm{i_2-a_2} + \norm{i_1-b_1} &&\\
        \phi & \geq \norm{i_2-a_2} + \norm{i_2-b_2} &&\\
        x(i_1) & \geq x(i_2) + 1
        &&\texttt{(enforce $I\in\mathcal{F}_2^\rightarrow$)}\\
        y(i_1) & \geq y(i_2) + 1 &&\texttt{(enforce $I\in\mathcal{F}_2^\uparrow$)}.
    \end{align*}

    After solving the above linear program for $I=(i_1, i_2)$, an exact schedule can be determined using~\cref{lem:two-robots-min-makespan-same-ordering}, applied once to the instance $(A,I)$ and once to the instance $(I,B)$.

    \medskip
    For paths of length two and three, we can extend the program to solve for two or three intermediate configurations instead by adding new variables for each additional edge.
    For every given pair of orderings, there are at most two relevant paths, so constantly many LPs suffices to find an optimal schedule.
\end{proof}

\subsection{Minimum sum of distances}
\label{subsec:two-robots-min-sum}

This technique extends directly to the setting in which our objective is minimizing the sum of distances traveled, rather than the time of latest arrival.
In particular, the schedules due to~\cref{lem:two-robots-min-makespan-same-ordering} actually minimize both makespan and total traveled length simultaneously:

\begin{corollary}
    \label{lem:two-robots-min-sum-same-ordering}%
    Let $A,B\in\mathcal{F}_2$ be commonly ordered configurations of two robots.
    There exists a schedule~$M$ from $A$ to $B$ with~$\distSum{M}=\minSum{A,B}$ in which each robot makes at~most~one~turn.
\end{corollary}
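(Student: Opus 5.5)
The plan is to reuse, without change, the schedule built in the proof of \cref{lem:two-robots-min-makespan-same-ordering}, taking $d=\diam{A,B}$ (any $d\geq\diam{A,B}$ would do). That proof already shows the schedule is feasible for commonly ordered $A,B$, and that each robot makes at most one turn. So the only new thing to prove is that the schedule's total traveled length is exactly $\minSum{A,B}$.

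For the length, we look at each trajectory $m_i$ on its own. In step one, robot $R_i$ moves monotonically along the $x$-axis from $x(a_i)$ to $x(b_i)$, which covers a distance of $\abs{x(b_i)-x(a_i)}$. In step two it waits, so it covers distance $0$. In step three it moves monotonically along the $y$-axis from $y(a_i)$ to $y(b_i)$, which covers a distance of $\abs{y(b_i)-y(a_i)}$. The length of $m_i$ is therefore $\norm{b_i-a_i}$, and summing over $i\in\{1,2\}$ gives $\distSum{M}=\minSum{A,B}$.

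This also matches the trivial lower bound: any trajectory from $a_i$ to $b_i$ has $L_1$-length at least $\norm{b_i-a_i}$, so $\distSum{M}\geq \minSum{A,B}$ for every schedule $M$, and the constructed schedule is optimal for \MSCMP.

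The only point that needs care is case (ii) of the lemma's proof, where feasibility was argued through the reversed schedule, i.e.\ moving along $y$ first and then along $x$. In that case we use the schedule with the axes swapped, whose feasibility the lemma's argument established. Reversing or swapping axes does not change the length of each trajectory: it is still one monotone $x$-segment and one monotone $y$-segment. Likewise, inserting waiting time never adds length. Hence the computation above holds in both cases, and I do not expect any real obstacle beyond checking these details.
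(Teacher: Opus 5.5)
Your proposal is correct and follows the paper's own reasoning. The paper derives this corollary directly from the observation that the schedule of \cref{lem:two-robots-min-makespan-same-ordering} moves each robot along one monotone $x$-segment and one monotone $y$-segment (plus waiting), so its total length is exactly $\minSum{A,B}$.

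One small remark on case (ii): the lemma does not swap axes there. It keeps the same $x$-first schedule and proves it feasible via its time-reversal, which is a $y$-first schedule from $B$ to $A$. As you note, your length computation holds under either reading.
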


Using this information, we again conclude that it suffices to study simple paths in the transition graph of the two robots.
We can therefore follow the same approach.

\begin{theorem}
    \MSCMP can be solved in $\mathcal{O}(1)$ time for two robots.
    \label{thm:mscmp-two-robots}%
\end{theorem}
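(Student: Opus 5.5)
The plan is to mirror the proof of \cref{thm:mmcmp-two-robots}, with \cref{lem:two-robots-min-sum-same-ordering} replacing \cref{lem:two-robots-min-makespan-same-ordering}. The first step is a min-sum analogue of \cref{cor:simple-paths-for-two}: some optimal schedule follows a simple path in the transition graph. Take any feasible schedule $M$ from $A$ to $B$, and suppose it is in some ordering $\mathcal{F}_2^\circ$ at time $t$ and again at a later time $t'$. Then $M(t)$ and $M(t')$ are commonly ordered. By \cref{lem:two-robots-min-sum-same-ordering}, we can replace $M$ on $[t,t']$ by a schedule of total length $\minSum{M(t),M(t')}$. By the triangle inequality for each robot separately, this is at most the length $M$ spends on that interval. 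Choosing $t$ as the first entry into the ordering and $t'$ as the last exit from it, and repeating over the four orderings, yields a schedule of no greater cost that visits each ordering in one contiguous block. Hence its sequence of orderings is a simple path in the four-cycle transition graph.

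The second step handles a fixed simple path $\mathcal{O}_0,\ldots,\mathcal{O}_j$ with $j\le 3$. Here $A\in\mathcal{O}_0$, $B\in\mathcal{O}_j$, and we introduce unknown diagonal configurations $I^{(h)}\in\mathcal{O}_{h-1}\cap\mathcal{O}_h$. Any schedule following this path must pass through such configurations. Conversely, consecutive configurations $A,I^{(1)},\ldots,I^{(j)},B$ are commonly ordered, so \cref{lem:two-robots-min-sum-same-ordering} realises cost
\[\sum_{i\in\{1,2\}}\Bigl(\norm{a_i-i^{(1)}_i}+\sum_{h=1}^{j-1}\norm{i^{(h)}_i-i^{(h+1)}_i}+\norm{i^{(j)}_i-b_i}\Bigr)\]
by concatenating the sub-schedules. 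Feasibility holds on each piece, and the pieces meet at shared configurations. Minimising this sum subject to the linear diagonal constraints (such as $x(i^{(h)}_1)\ge x(i^{(h)}_2)+1$ and $y(i^{(h)}_1)\ge y(i^{(h)}_2)+1$) is a constant-size LP. To see this, linearise each $\abs{\cdot}$ with an auxiliary variable $z\ge\pm(\cdot)$; since the objective minimises a sum of these variables, this linearisation is exact.

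Finally, we enumerate the constantly many simple paths between an ordering containing $A$ and one containing $B$; if $A$ or $B$ is diagonal, we try every choice. We solve each LP and output the best result. The path of length zero, where $A$ and $B$ are commonly ordered, is covered directly with cost $\minSum{A,B}$. Each LP has $\mathcal{O}(1)$ variables and constraints, so the total time is $\mathcal{O}(1)$.

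The main obstacle I expect is the first step. I must check that the splice really reduces every ordering-repetition, including the case where $A$ or $B$ itself lies in several orderings, without merely shifting a repetition elsewhere. I would handle this by collapsing orderings one at a time, using first-entry and last-exit times, and noting that every splice preserves the schedule outside the replaced interval and never increases total length.
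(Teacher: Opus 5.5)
Your proposal is essentially the paper's proof: reduce to simple paths in the four-cycle transition graph using the one-turn schedule of \cref{lem:two-robots-min-sum-same-ordering}, then solve a constant-size LP over diagonal intermediate configurations for each of the constantly many paths, and you spell this out in more detail than the paper does. The obstacle you flag is real: a later splice could re-create a repetition of an already collapsed ordering, and ``the schedule is preserved outside the replaced interval'' alone does not exclude this. It is closed by observing that along the one-turn schedule both $x(p_1)-x(p_2)$ and $y(p_1)-y(p_2)$ change monotonically in time, so each inserted piece meets every ordering in a single interval consistent with the existing blocks; alternatively, skip splicing and lower-bound an arbitrary schedule via successive last-exit times from orderings, which directly yields distinct, consecutively adjacent orderings and diagonal intermediate configurations.
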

\pagebreak
\begin{proof}
    We again enumerate all relevant, simple paths in the transition graph and determine a sequence of diagonal states using a linear program, just as in~\cref{thm:mmcmp-two-robots}.
    Let $A,B\in \mathcal{F}_2$ be two configurations that are contained in two intersecting orderings.
    Assume that these are $\mathcal{F}_2^\uparrow$ and $\mathcal{F}_2^\rightarrow$, all other cases are symmetric.
    The following linear program determines a diagonal state $I\in\mathcal{F}_2^\uparrow\cap\mathcal{F}_2^\rightarrow$ that has minimal distance sum to both $A$ and $B$, serving as a drop-in replacement for the prior linear program.
    \begin{align*}
        \text{minimize}\qquad\Sigma & \\
        \text{subject to}\qquad
        \Sigma & \geq \norm{i_1-a_1} + \norm{i_1-b_1} + \norm{i_2-a_2} + \norm{i_2-b_2} && \texttt{(distance sum)}\\
        x(i_1) & \geq x(i_2) + 1 &&\texttt{(enforce $I\in\mathcal{F}_2^\rightarrow$)}\\
        y(i_1) & \geq y(i_2) + 1 &&\texttt{(enforce $I\in\mathcal{F}_2^\uparrow$)}\qedhere
    \end{align*}
\end{proof}

\subsection{Moving between trapezoids}
\label{subsec:moving-between-rectangles}

For our minimum exposure algorithm, we require the start and end positions of the two robots to be more general than two fixed points in the plane.
Instead, we consider the minimum distance between two \emph{states}.
Intuitively, a state specifies that each robot may be anywhere within a given horizontal trapezoid. Additionally, the state may specify a restriction on the relative position of the two robots using the integer $\sigma$.
We will later use $\sigma$ to fix the relative order of the robots when both lie in the same trapezoid and this trapezoid is narrow, or in two narrow trapezoids that are close to each other. In this case, the robots cannot swap positions, as visualized in~\cref{fig:state-narrow-rectangle}. Formally, we define a state as follows:

\begin{figure}[tb]%
    \captionsetup[subfigure]{justification=centering}%
    \begin{subfigure}[t]{0.5\columnwidth}%
        \centering%
        \includegraphics[page=1]{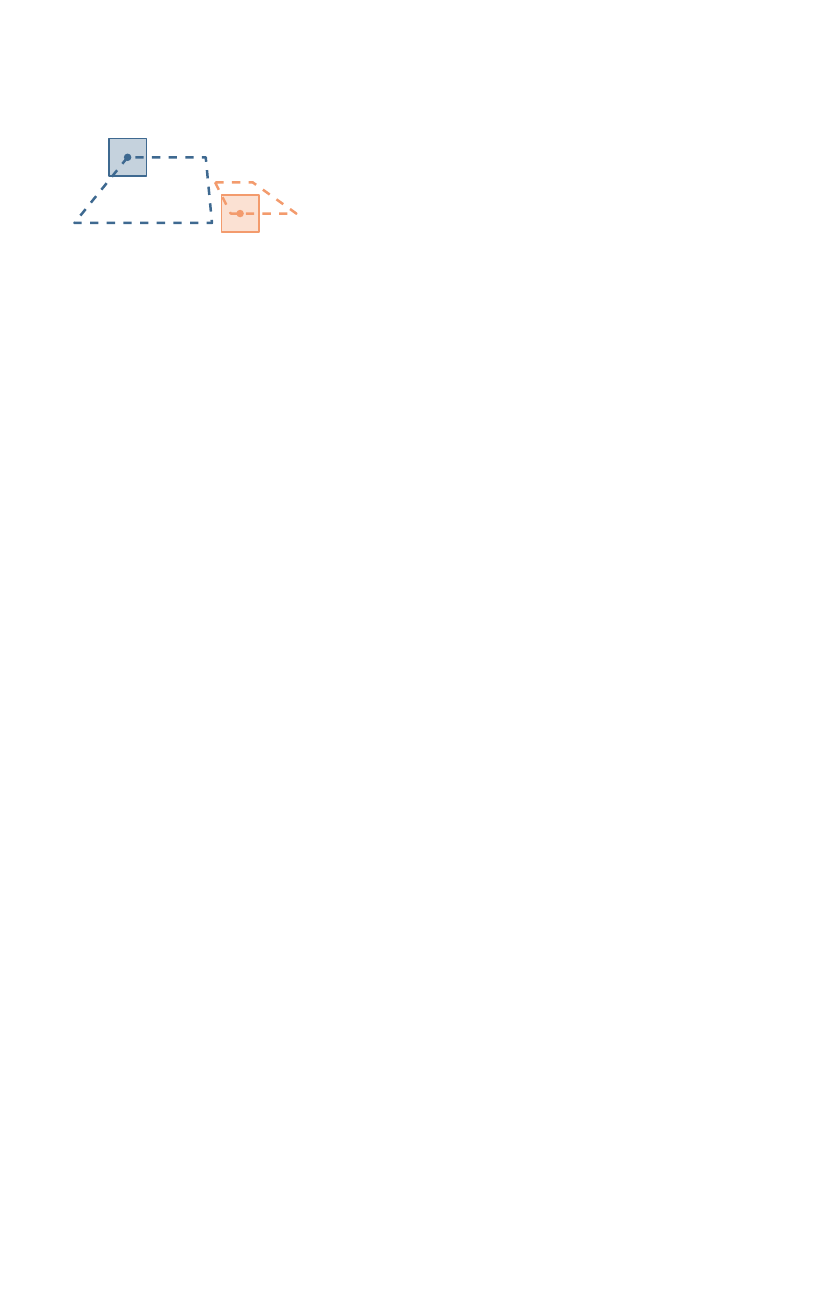}%
        \subcaption{}
    \end{subfigure}%
    \begin{subfigure}[t]{0.5\columnwidth}%
        \centering%
        \includegraphics[page=2]{figures/state}%
        \subcaption{}
    \end{subfigure}%
    \caption{Two states. In (b), the robots cannot change their relative order along the $y$-axis.}%
    \label{fig:state-narrow-rectangle}%
\end{figure}

\begin{definition}
    \label{def:state}
    A \emph{state} is a triple $(X,Y,\sigma)$, where $X$ and $Y$ are two horizontal trapezoids that are either interior disjoint or the same trapezoid (i.e. $X = Y$), and $\sigma \in \{-2,-1,0,1,2\}$. A~configuration $P \in\mathcal{F}_2$ is in state $(X,Y,\sigma)$ if:
    \begin{itemize}
        \item $p_1 \in X$ and $p_2 \in Y$, and
        \item if $\sigma = 0$, then there is no restriction on the relative position of $p_1$ and $p_2$, and
        \item if $\sigma = 1$, then $x(p_1) \leq x(p_2) + 1$, if $\sigma = -1$, then $x(p_1) \geq x(p_2) -1$, and
        \item if $\sigma = 2$, then $y(p_1) \leq y(p_2) + 1$, if $\sigma = -2$, then $y(p_1) \geq y(p_2) - 1$.
    \end{itemize}
\end{definition}

\begin{theorem}
    \label{theorem:lp_states}
    \MMCMP between any two \emph{states} can be solved in~$\mathcal{O}(1)$ time~for two robots.
\end{theorem}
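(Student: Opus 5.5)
The plan is to reduce the problem between states to the fixed-configuration problem of \cref{thm:mmcmp-two-robots}. The start configuration $A$ and the target configuration $B$ become variables of a constant-size linear program, constrained to lie in the source state $(X,Y,\sigma)$ and the target state $(X',Y',\sigma')$, respectively.

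The first step is to observe that membership of a configuration $P$ in a state is a conjunction of constantly many linear constraints. First, $p_1\in X$ and $p_2\in Y$, where a horizontal trapezoid is the intersection of four half-planes: two horizontal ones and two given by its (possibly slanted) sides. Second, the $\sigma$-constraint is a single linear inequality, or no constraint at all when $\sigma=0$. Feasibility of $P$ itself, $\norm{p_1-p_2}_\infty\geq 1$, is a disjunction. I will absorb it by branching on which ordering $P$ belongs to, which costs a factor of four. In that ordering the constraint is linear, e.g.\ $x(p_1)\geq x(p_2)+1$.

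The second step is to fix an ordering $O_A$ for $A$, an ordering $O_B$ for $B$, and a simple path $O_A=O_0,O_1,\dots,O_r=O_B$ in the transition graph with $r\leq 3$. For this path I write the linear program of \cref{thm:mmcmp-two-robots} with intermediate diagonal configurations $I_1,\dots,I_r$, where $I_j\in O_{j-1}\cap O_j$. The makespan variable $\phi$ is constrained to be at least every sum of the form $\norm{\cdot}$ along the chain $A,I_1,\dots,I_r,B$ that picks one robot per leg; this encodes $\phi\geq\sum_j\diam{I_{j-1},I_j}$. The variables of $A$ and $B$ are now free but subject to the state constraints and the ordering constraints. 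The $\abs{\cdot}$ terms are linearized as before. By \cref{lem:two-robots-min-makespan-same-ordering}, each leg between commonly ordered configurations is realized with makespan exactly its diameter, so every optimal LP solution yields a feasible schedule of that makespan. By \cref{cor:simple-paths-for-two}, the optimum over all constantly many branches equals the optimum makespan between the states. The same corollary also shows that restricting to simple paths loses nothing once $A$ and $B$ are fixed. Taking the minimum over all branches therefore gives the answer in $\mathcal{O}(1)$ time.

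The step I expect to be the main obstacle is correctness of the modeling rather than the LP itself. The schedules only need to start and end in the states; intermediate configurations are unrestricted. So I must check that the state constraints are imposed only on $A$ and $B$, and that the intersection of a state with an ordering is convex, so that the branching covers every feasible start and end configuration. Both are immediate from the fact that every constraint is linear once the ordering is fixed. One further point needs attention: $\phi\geq\sum_j\max_i(\cdot)$ is not directly linear, since the maximum sits inside a sum. Expanding the maximum over all $2^{r+1}$ robot choices still gives only constantly many linear constraints, so the formulation remains a constant-size LP. Finally, if the state constraints are infeasible for some branch, that branch is discarded. If all branches are infeasible, no schedule exists between the states.
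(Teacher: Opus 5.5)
Your proposal is correct and follows essentially the same route as the paper: you treat the endpoint configurations as LP variables constrained by the trapezoid and $\sigma$ inequalities, branch on orderings, and reuse the diagonal-intermediate LP of \cref{thm:mmcmp-two-robots} over the constantly many simple paths in the transition graph. The only difference is that you branch on all four orderings at every endpoint, whereas the paper branches only when $\sigma=0$ and asserts that states with $\sigma\neq 0$ are commonly ordered, so your version is slightly more uniform and does not rely on that assertion.
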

\begin{proof}
    We use the same approach as in \cref{thm:mmcmp-two-robots},   we map a simple path in the transition graph to a shortest feasible schedule that visits those orderings in the given order.
    For a state $(X,Y,\sigma)$ with $\sigma \neq 0$, all feasible configurations in this state are commonly ordered.
    However, when $\sigma = 0$, we do not have this property.
    In this case, we apply the approach of \cref{thm:mmcmp-two-robots} for all four possible orderings in the state.
    Note that some orderings may not be valid based on the dimensions (and placement) of $X$ and $Y$, but the linear program will simply return that there is no solution in such a case.

    We further adapt each linear program of \cref{thm:mmcmp-two-robots} to account that the  points $a_1,a_2,b_1,b_2$ are now no longer constants, since a robot can be anywhere inside a given trapezoid, but can be expressed as several linear equations themselves.
    We also enforce the ordering of the start and end configurations by adding two additional equations reflecting these orderings.
    We thus add $18$ additional inequality terms with respect to the linear program in~\cref{thm:mmcmp-two-robots} in total.
\end{proof}

By applying the approach of \cref{theorem:lp_states} to~\cref{thm:mscmp-two-robots} instead, we obtain the following.

\begin{corollary}
    \MSCMP between any two \emph{states} can be solved in $\mathcal{O}(1)$ time for two~robots.
\end{corollary}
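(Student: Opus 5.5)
The plan mirrors the proof of \cref{theorem:lp_states}, with the makespan linear program of \cref{thm:mmcmp-two-robots} replaced by the distance-sum linear program of \cref{thm:mscmp-two-robots}.

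The first step establishes that it suffices to search over simple paths in the transition graph. Take any feasible schedule between a configuration $A$ in state $(X,Y,\sigma)$ and a configuration $B$ in state $(X',Y',\sigma')$. If it leaves an ordering and later re-enters it, replace the intermediate portion by the schedule of \cref{lem:two-robots-min-sum-same-ordering} between the two commonly ordered configurations. That schedule attains the lower bound $\minSum{\cdot,\cdot}$, so by the triangle inequality the total traveled length does not increase. The endpoints $A$ and $B$ are untouched, so membership in the start and target states is preserved. Hence some optimal schedule follows a simple path of length at most three in the four-cycle, and we may enumerate the constantly many such paths.

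The second step is to fix the start and end orderings and set up one LP per choice. If $\sigma\neq 0$, all configurations in the state are commonly ordered, though the relevant ordering may still be one of two, and trying both is harmless. If $\sigma=0$, we enumerate all four orderings, and do the same for the target state. For each start ordering, end ordering, and simple path between them, we write a single LP in the following variables:
\begin{itemize}
\item the endpoints $a_1,a_2,b_1,b_2$, now treated as variables;
\item the intermediate diagonal configurations $I^{(1)},\dots,I^{(j)}$, one per transition-graph edge on the path.
\end{itemize}
The constraints are of four kinds:
\begin{itemize}
\item $a_1\in X$, $a_2\in Y$, $b_1\in X'$, $b_2\in Y'$, each a constant number of linear inequalities since trapezoids are convex polygons with four sides;
\item the $\sigma$ and $\sigma'$ inequalities from \cref{def:state};
\item linear inequalities placing $A$ and $B$ in the chosen start and end orderings;
\item for each $I^{(\ell)}$, the two inequalities placing it in the intersection of consecutive orderings on the path.
\end{itemize}
The objective is $\Sigma\ge$ the sum of $L_1$ distances between consecutive configurations along the chain $A,I^{(1)},\dots,I^{(j)},B$, linearised by auxiliary variables for each absolute value. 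Infeasible combinations are detected by the LP itself.

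The third step argues correctness and the running time. Any LP solution yields a feasible schedule with exactly that total length: consecutive configurations in the chain are commonly ordered, so \cref{lem:two-robots-min-sum-same-ordering} applies to each consecutive pair. Conversely, any schedule following the chosen path must pass through such diagonal configurations, so its cost is at least the LP optimum. We take the minimum over constantly many LPs, each of constant size, so the total time is $\mathcal{O}(1)$.

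The main subtlety is the first step. The path-shortcutting argument must respect the state constraints, in particular that the replacement segment does not need to satisfy $\sigma$. I expect this to go through because the state constraints apply only to the endpoints $A$ and $B$, not to intermediate configurations.
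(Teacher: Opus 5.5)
Your proposal is correct and is essentially the paper's argument. The paper obtains this corollary by rerunning the proof of \cref{theorem:lp_states} with the distance-sum LP of \cref{thm:mscmp-two-robots} in place of the makespan LP: enumerate orderings and simple transition-graph paths, then treat the endpoints as LP variables constrained to their trapezoids, together with the $\sigma$ and ordering inequalities.

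One small robustness remark: extra candidate orderings only add valid schedules, so you can enumerate all four start and end orderings for every state regardless of $\sigma$, which avoids relying on the claim that states with $\sigma\neq 0$ are commonly ordered.
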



\section{Minimum exposure CMP for two robots}
\label{sec:min-exposure-for-two-robots}

In this section, we solve the \MECMP problem.
Recall that the input consists of the starting and ending configurations of two unit square robots~$R_1$ and~$R_2$, and a set of polygonal domains~$\mathcal S = \{S_1,\ldots,S_m\}$ that act as the covered regions.
The output is a (polygonal) collision-free and minimum-exposure schedule of the two robots $(R_1,R_2)$.

This can be broken down into two subproblems: Deciding whether robots can go from one covered configuration to another without being exposed and, if exposure cannot be avoided, finding a schedule that efficiently moves them into the next covered configuration.
The latter problem corresponds to a variant of \MMCMP as seen in~\cref{sec:two-robots-cmp}, and the prior to \textsc{Feasibility} of reconfiguration in a polygonal domain, for which $\mathcal{O}(n^2)$ time algorithms are known to exist~\cite{ramanathan.alagar1985algorithmicmotionplanning}.
In~\cref{sec:two-robots-in-a-polygonal-environment}, we give a simple algorithm that is more suitable for our application, obtaining the following:

\begin{restatable}{theorem}{theoremFeasibilityForTwoRobots}
    \label{thm:feasibility-for-two-robots}
    Let $S$ be a polygonal domain with $n$ vertices.
    There exists a data~structure that can be computed in~$\mathcal{O}(n^2)$ time and $\mathcal{O}(n^2)$ space that decides \textsc{Feasibility} for any start and target configuration in~$\mathcal{F}_2[S]$ in $\mathcal{O}(\log n)$ time.
\end{restatable}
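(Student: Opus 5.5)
We plan to prove \cref{thm:feasibility-for-two-robots} by discretizing $\mathcal{F}_2[S]$ into $\mathcal{O}(n^2)$ cells, connecting cells that are mutually reachable, and computing connected components. A query then locates the two configurations and compares their component labels.

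\textbf{Step 1: Decompose the free space of a single robot.} Compute $\inner(S)$; by \cref{def:inner-minkowski} it is a polygonal domain with $\mathcal{O}(n)$ vertices. Build its horizontal trapezoidal decomposition $\mathcal{T}$, which has $\mathcal{O}(n)$ trapezoids, together with its adjacency graph. A robot alone moves freely inside each trapezoid and between adjacent trapezoids.

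\textbf{Step 2: Define the cells.} The candidate cells are the states $(X,Y,\sigma)$ of \cref{def:state} with $X,Y\in\mathcal{T}$. This gives $\mathcal{O}(n^2)$ pairs. We refine each pair by the orderings, intersecting $X\times Y$ with each of the four orderings $\mathcal{F}_2^\rightarrow,\mathcal{F}_2^\leftarrow,\mathcal{F}_2^\uparrow,\mathcal{F}_2^\downarrow$.

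\textbf{Step 3: Show each cell is path-connected.} Consider a cell $C=(X\times Y)\cap\mathcal{F}_2^{o}$, taken only when it is nonempty. Trapezoids are convex and each ordering is given by a linear inequality, so $C$ is convex. The straight segment between any two configurations in $C$ therefore stays in $C$. Each robot stays in its own trapezoid, hence in $\inner(S)$, and the ordering inequality keeps the robots interior disjoint. So each cell is connected. This is essentially the same observation underlying \cref{lem:two-robots-min-makespan-same-ordering}.

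\textbf{Step 4: Add edges between cells.} Connect two cells if their closures intersect inside $\mathcal{F}_2[S]$. This covers three situations: a robot crosses a shared trapezoid edge, with $X$ replaced by an adjacent $X'$ in the same ordering; or the two cells share a diagonal configuration, so the ordering changes as in the transition graph; or both happen. Whether two cells are adjacent is a constant-size LP feasibility check.

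\textbf{Step 5: Bound the number of edges.} Each trapezoid $X$ has $\mathcal{O}(1)$ neighbours on average, since $\mathcal{T}$ is planar and the total degree is $\mathcal{O}(n)$. Summed over all $Y$, the edges where one robot changes trapezoid number $\mathcal{O}(n)\cdot\mathcal{O}(n)=\mathcal{O}(n^2)$. Ordering changes contribute only a constant number of edges per pair $(X,Y)$.

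\textbf{Step 6: Argue correctness.} Any feasible schedule traverses a sequence of cells, and consecutive cells in that sequence have touching closures. So reachability in $\mathcal{F}_2[S]$ equals connectivity in this graph. Computing connected components by BFS takes $\mathcal{O}(n^2)$ time and space.

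\textbf{Step 7: Answer queries.} Build a point-location structure on $\mathcal{T}$, with $\mathcal{O}(n)$ preprocessing and $\mathcal{O}(\log n)$ query time. Locate $p_1$ and $p_2$ to find their trapezoids, test which orderings the configuration satisfies to identify a cell, and compare the component labels of the two configurations. To obtain an $\mathcal{O}(1)$ lookup from a trapezoid pair to its cells, store a table indexed by trapezoid pairs, which fits in $\mathcal{O}(n^2)$ space.

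\textbf{Main obstacle.} The difficult step is the converse direction of correctness in Step 6. A motion may pass through a configuration lying on the boundary of several cells at once, for example when one robot sits on a trapezoid edge while the robots are diagonal. I would handle this by defining cells as closed sets and using the closure-intersection adjacency of Step 4. Then any continuous path, which visits finitely many cells because the schedule is polygonal, moves only between cells whose closures intersect. A second concern is degenerate cases where a trapezoid pair admits configurations whose intersection with an ordering is disconnected. Convexity in Step 3 rules this out.
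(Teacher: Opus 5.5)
Your proposal is correct, but it takes a genuinely different route from the paper.

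\textbf{The paper's route.} The paper discretizes only the boundary. Its graph $G_S$ has the corner configurations $V(\inner(S))^2$ as vertices. Its edges are collision-free single-robot moves along edges of $\inner(S)$ and of its horizontal and vertical decompositions. A query first maps each configuration to a reachable corner configuration, using the normalization of \cref{lem:normalization,proposition:convex_corners}. Correctness then hinges on \cref{lem:exists-boundary-schedule}, which needs three ingredients: decoupled schedules~\cite{argawal.halperin.sharir.steiger2024near-optimal}, a projection of every straight move onto $\partial\inner(S)$, and an induction that switches normalization axes at diagonal configurations.

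\textbf{Your route and what each buys.} You instead cover all of $\mathcal{F}_2[S]$ by finitely many closed convex cells $(X\times Y)\cap\mathcal{F}_2^{o}$. Motion inside a cell is free by convexity, and adjacency is a constant-size LP. The correctness argument becomes essentially topological, and decoupling, normalization and boundary projection are not needed at all. Queries use plain point location plus a deterministic table lookup. Your cells are also close relatives of the states $(X,Y,\sigma)$ that the paper later uses for $G$ in \cref{sec:min-exposure-for-two-robots}. In return, the paper's version gives a graph of degree at most twelve whose edges are explicit one-robot motions along the boundary. It also needs only constant-size intersection tests rather than LPs.

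\textbf{Points to tighten.} First, define the cells only as $(X\times Y)\cap\mathcal{F}_2^{o}$, as in your Step~3, not via the $\sigma$-states of \cref{def:state}; the configuration sets of those states need not be convex.

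Second, in Step~6 the clean argument is not that a polygonal schedule visits finitely many cells. Rather, the union of the cells in one component and the union of all other cells are disjoint closed sets covering $\mathcal{F}_2[S]$. This holds because there are finitely many closed cells and intersecting cells are adjacent. The preimages under a continuous schedule therefore split the time interval into two disjoint closed sets, so one must be empty. This also covers motions that oscillate along cell boundaries.

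Third, in Step~5 remember that closed trapezoids meeting in a single point are adjacent too. This is still $\mathcal{O}(n)$ pairs, since each vertex of the decomposition lies on $\mathcal{O}(1)$ trapezoids. Pairs where both robots change trapezoid are bounded by the same product $\mathcal{O}(n)\cdot\mathcal{O}(n)$.
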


To leverage these results for \MECMP, we then define a graph $G$ and compute the shortest path distance in~$G$.
The graph will be defined in such a way that: (i) the graph contains not too many vertices, (ii) for two vertices~$A$ and~$B$ in~$G$, the shortest path distance~$d_G(A,B)$ matches the \MECMP between states~$A$ and~$B$, and (iii) the weight of each edge in the graph can be computed using an individual call to either \textsc{CMP} (Theorem~\ref{theorem:lp_states}) or \textsc{Feasibility} (Theorem~\ref{thm:feasibility-for-two-robots}).

The vertices of the graph~$G$ will be the states as defined in~\cref{def:state}.
So, the vertices will be in the form $(X,Y,\sigma)$, where $X$ and $Y$ are horizontal trapezoids and $\sigma \in \{-2,-1,0,1,2\}$.
Here, $X$ and $Y$ are covered, so $R_1$ can move in~$X$ with zero exposure and same with $R_2$ in~$Y$.
If~$X$ and~$Y$ are either equal or close enough to one another, then~$R_1$ and~$R_2$ may not be able to move freely in their respective rectangles due to collisions with the other robot.
In this case, $\sigma \in \{1,-1,2,-2\}$ will correspond to the constraint that~$R_1$ remains to the left of, right of, below, or above~$R_2$.
If~$\sigma = 0$ then the positions of $R_1$ and $R_2$ within $X$ and $Y$ have no constraints of this kind, in other words, any feasible configuration of $R_1 \in X$ and $R_2 \in Y$ is reachable from any other feasible configuration.

\subparagraph{The vertices of \boldmath$G$.}
Next, we formalize the definition of the vertices of~$G$ and their construction.
Recall the inner Minkowski sum $\inner(S)$ from \cref{def:inner-minkowski}.
We construct $\inner(S_i)$ for all~$S_i \in \mathcal S$, and then compute the horizontal decomposition of $\inner(S_i)$ into a set of trapezoids.
Let $W$ be the set of trapezoids in all horizontal decompositions of $\{\inner(S_i)\mid S_i \in \mathcal S\}$.
Let $X,Y \in W$.
First, we construct the vertices in~$G$ of the form~${(X,Y,\sigma)}$.
Let~$Z$ be the convex hull of~$X \cup Y$, and let $h$ and $w$ denote the height and width of the bounding box of $Z$.
We have three cases, as illustrated in~\cref{fig:z}:
\begin{itemize}
    \item If $Z$ contains a $1 \times 1$ square, then add $(X,Y,0)$ to~$G$.
    \item If $Z$ does not contain a $1 \times 1$ square, and $w\geq 1$, then add $(X,Y,1)$ and $(X,Y,-1)$ to~$G$.
    \item If $Z$ does not contain a $1 \times 1$ square, and $h\geq 1$, then add $(X,Y,2)$ and $(X,Y,-2)$ to~$G$.
\end{itemize}
Note that for $X = Y$, we consider the trapezoid $Z = X$.

\begin{figure}%
    \centering%
    \includegraphics{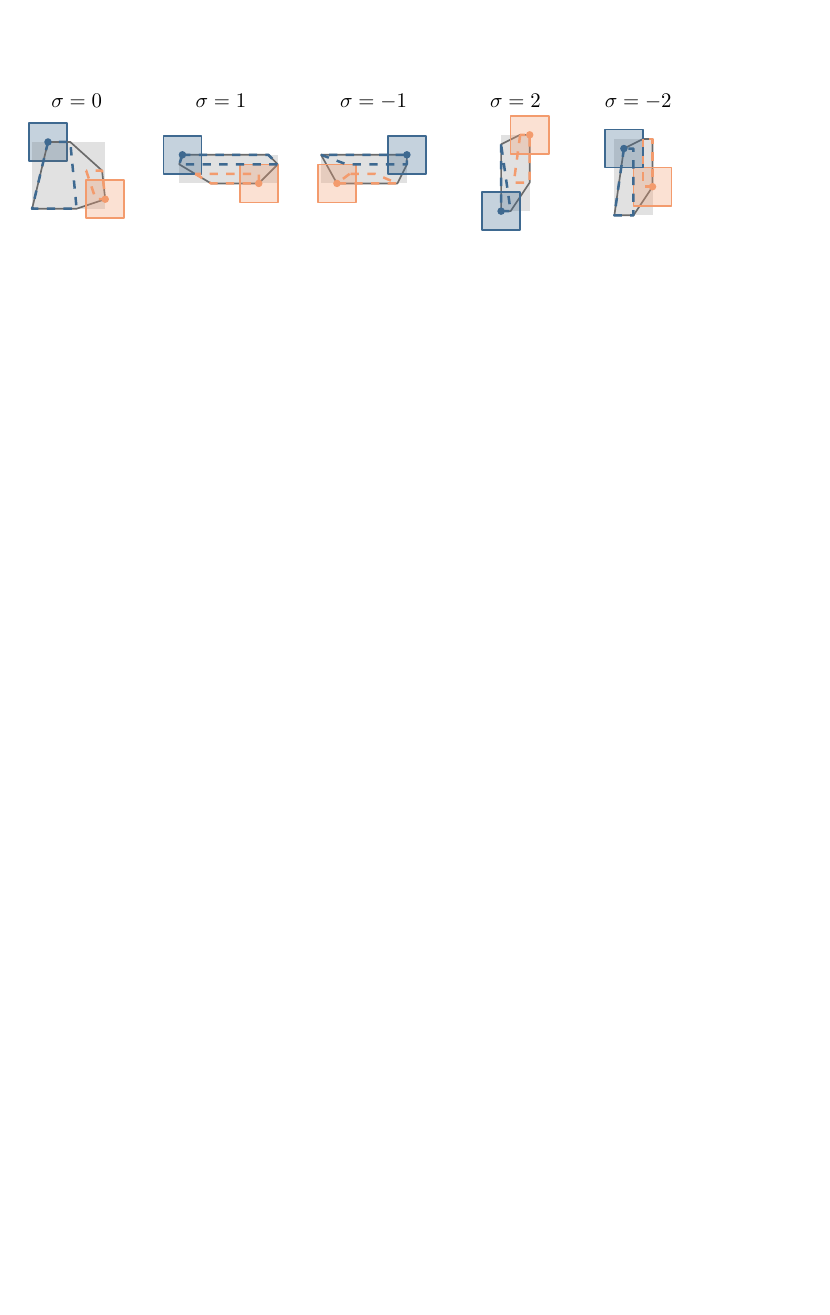}
    \caption{An illustration of $X$ (blue dashed), $Y$ (orange dashed) and $Z$ (gray) in different cases $\sigma \in \{-2,-1,0,1,2\}$. If~$Z$ contains a $1 \times 1$ square then $\sigma = 0$ is added. Otherwise, if the width is $\geq 1$, then $\sigma = 1, -1$ is added; if the height is $\geq 1$, then $\sigma = 2,-2$ is added.}
    \label{fig:z}
\end{figure}

\subparagraph{The edges of \boldmath$G$.} Next, we construct the edges of the graph~$G$. The edges of~$G$ are divided into two types: zero exposure edges and positive exposure edges.

First, we construct the zero exposure edges. Construct the \textsc{Feasibility} data structure in Theorem~\ref{thm:feasibility-for-two-robots} for all $S_i \in \mathcal{S}$. Consider a pair of vertices $(X,Y,\sigma)$ and $(X',Y',\sigma')$ in~$G$, with $X,X' \in S_i$ and $Y,Y' \in S_j$. Use $(X,Y,\sigma)$ to construct a configuration $(p_1,p_2)$ by choosing $p_1,p_2$ such that ${\norm{p_1-p_2}_\infty \geq 1}$, and:
\begin{itemize}
    \item If $\sigma = 0$, then choose $p_1$ and $p_2$ to be any point in $X$ and $Y$, respectively.
    \item If $\sigma = 1$, then choose $p_1$ to be a leftmost point of~$X$, and $p_2$ a rightmost point of~$Y$.
    \item If $\sigma = -1$, then choose $p_1$ to be a rightmost point of~$X$, and $p_2$ a leftmost point of~$Y$.
    \item If $\sigma = 2$, then choose $p_1$ to be on the bottom side of~$X$, and $p_2$ on the top side of~$Y$.
    \item If $\sigma = -2$, then choose $p_1$ to be on the top side of~$X$, and $p_2$ on the bottom side of~$Y$.
\end{itemize}
Similarly, construct $(p_1',p_2')$ from $(X',Y',\sigma')$. Query the \textsc{Feasibility} data structure to decide whether there is a feasible zero exposure schedule between $(p_1,p_2)$ and~$(p_1',p_2')$. Finally, add a zero exposure edge between $(X,Y,\sigma)$ and $(X',Y',\sigma')$ if and only if there is a feasible schedule between~$(p_1,p_2)$ and $(p_1',p_2')$.

Second, we construct the positive exposure edges.
For these edges, we ignore the covering regions~$\mathcal S$, and therefore, all edges constructed in this step have positive weight.
Consider a pair of vertices in~$G$ that have no zero exposure edge connecting them, let them be $(X,Y,\sigma)$ and $(X',Y',\sigma')$.
We use Theorem~\ref{theorem:lp_states} directly on $(X,Y,\sigma)$ and $(X',Y',\sigma')$ to compute the \MMCMP between the two states, and we set the weight of the positive exposure edge between $(X,Y,\sigma)$ and $(X',Y',\sigma')$ to be this~value.

\begin{theorem}
     \MECMP can be solved in $\mathcal{O}(n^4 \log n)$ time for two robots.
    \label{thm:mecmp-for-two-robots}
\end{theorem}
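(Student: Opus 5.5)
The plan is to run Dijkstra on the graph $G$ constructed above, with the start configuration $A$ and the target configuration $B$ added as extra vertices, and to show two things: the shortest-path distance in $G$ equals the optimal exposure, and $G$ can be built in $\mathcal{O}(n^4\log n)$ time.

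\textbf{Size of $G$ and construction time.} The horizontal decompositions of the regions $\inner(S_i)$ contain $\mathcal{O}(n)$ trapezoids in total, so $W$ has size $\mathcal{O}(n)$. Hence $G$ has $\mathcal{O}(n^2)$ vertices, at most five per pair $(X,Y)$, and $\mathcal{O}(n^4)$ candidate edges.

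\begin{itemize}
\item Building the \textsc{Feasibility} structures of \cref{thm:feasibility-for-two-robots} for all $S_i$ costs $\sum_i\mathcal{O}(n_i^2)=\mathcal{O}(n^2)$.
\item Each zero-exposure edge test is a single $\mathcal{O}(\log n)$ query.
\item Each positive-exposure weight is an $\mathcal{O}(1)$ call to \cref{theorem:lp_states}.
\end{itemize}

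Building all edges therefore takes $\mathcal{O}(n^4\log n)$ time. Dijkstra with a binary heap on $\mathcal{O}(n^2)$ vertices and $\mathcal{O}(n^4)$ edges runs in $\mathcal{O}(n^4\log n)$, which gives the claimed bound.

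\textbf{Handling $A$ and $B$.} The start and target need not be covered. I would add $A$ and $B$ as point states: a zero-exposure edge between $A$ and $B$ if both are covered and the feasibility query succeeds, and otherwise \MMCMP edges, computed by the linear programs of \cref{thm:mmcmp-two-robots} with one endpoint fixed, to every vertex of $G$. Vertices equivalent to $A$ or $B$ get zero-weight edges.

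\textbf{Soundness ($d_G\ge\mathrm{OPT}$).} Every edge is realizable with exposure at most its weight. For a positive-exposure edge, the \MMCMP schedule between the two states has that makespan, and by the definition of exposure (maximum distance travelled by either robot while exposed) its exposure is at most the makespan. For a zero-exposure edge, I must argue that any configuration in a state $(X,Y,\sigma)$ can reach the representative configuration $(p_1,p_2)$ chosen above without leaving cover. Within the convex region $Z$, the robots move by the one-turn schedules of \cref{lem:two-robots-min-makespan-same-ordering}:

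\begin{itemize}
\item If $\sigma\neq0$, the configuration is commonly ordered with the extremal representative, so such a schedule exists.
\item If $\sigma=0$, $Z$ contains a unit square, so the robots can also swap order inside $X\cup Y$.
\end{itemize}

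The subtle point is that $Z$ is the convex hull rather than $X\cup Y$. I would argue that the relevant motions stay in the two trapezoids, or in the covered region between two adjacent trapezoids of the same decomposition. Concatenating edge schedules then gives a feasible schedule of exposure at most $d_G$.

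\textbf{Optimality ($d_G\le\mathrm{OPT}$), the main obstacle.} Take an optimal schedule and split its time interval into maximal covered and exposed intervals. Assume, after a perturbation or finiteness argument, that there are finitely many. During a covered interval each robot stays in a single $S_i$; its endpoints lie in some states, which are connected by a zero-exposure path in $G$ because feasibility is preserved.

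An exposed interval runs between two covered configurations $P$ and $Q$. Its exposure is at least the minimum makespan from $P$ to $Q$, since the maximum distance travelled by either robot during the interval is at least the diameter, and at least any feasible makespan bound. That minimum is at least the state-to-state \MMCMP value computed by \cref{theorem:lp_states}, because a minimum over larger sets is smaller. So each exposed interval is charged to one positive edge.

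The delicate part is checking that every covered configuration lies in some vertex state $(X,Y,\sigma)$, including the case where $Z$ contains no unit square but the robots are ordered. It also needs checking that the state assignment is consistent along a covered interval, so the states at its two ends are joined in $G$; here I would rely on \cref{thm:feasibility-for-two-robots} and on the transitivity of reachability. I expect this to be where most of the work lies.
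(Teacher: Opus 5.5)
Your plan follows the same route as the paper's proof: the state graph $G$ with $A$ and $B$ attached, one shortest-path computation, and splitting an optimal schedule where it switches between covered and exposed. Covered stretches are charged to zero-exposure edges via \cref{thm:feasibility-for-two-robots}, exposed stretches to \cref{theorem:lp_states}, and the $\mathcal{O}(n^4\log n)$ count is the same. However, the step you call the main obstacle is short, and two of the arguments you do give fail as written.

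That every covered configuration $P$ lies in a vertex of $G$ is the Claim inside the paper's proof. Let $X\ni p_1$ and $Y\ni p_2$ be trapezoids of the decompositions. If $Z$ contains a unit square, $(X,Y,0)$ is a vertex. Otherwise $\norm{p_1-p_2}_\infty\ge1$ gives one of two cases. Either $\abs{x(p_1)-x(p_2)}\ge1$, so $w\ge1$ and $P$ lies in $(X,Y,1)$ or $(X,Y,-1)$. Or $\abs{y(p_1)-y(p_2)}\ge1$, so $h\ge1$ and $P$ lies in $(X,Y,\pm2)$.

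Your lower bound for an exposed stretch from $P$ to $Q$ is a non sequitur. Being at least $\diam{P,Q}$ does not imply being at least the minimum makespan, which can be strictly larger. For example, two side-by-side robots that swap have diameter $1$ but need makespan $2$. The correct argument is that the optimal schedule restricted to that stretch is itself a feasible schedule from $P$ to $Q$, since cover does not restrict motion. Its duration is the exposed time charged, so it is at least the \MMCMP value of $(P,Q)$, and hence at least the state-to-state value, as you say.

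Your soundness direction goes beyond the paper, which takes the reconstruction of a schedule from a shortest path for granted. But the one-turn schedules of \cref{lem:two-robots-min-makespan-same-ordering} are the wrong tool here. An $x$-then-$y$ path between two points of a horizontal trapezoid with slanted sides can cross those sides. These sides are edges of $\inner(S_i)$, so the robot becomes exposed; and, as you note, $Z$ need not be covered at all.

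For $\sigma\neq0$, use straight-line interpolation instead. If two configurations satisfy the same linear separation, say $x(p_2)-x(p_1)\ge1$, their convex combinations keep each robot in its convex trapezoid and preserve the separation. So the interpolated configurations are covered and collision-free.

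This in-state reachability is also all your consistency concern needs. Once each endpoint of a covered stretch can reach its state's representative inside cover, the feasibility query between the two representatives succeeds. That gives a direct zero-exposure edge. For $\sigma=0$, reachability of the representative needs a separate argument, which neither your plan nor the paper's proof supplies.
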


\begin{proof}
    Recall that our input is a pair of starting and ending configurations $A, B \in \mathcal F_2$, as well as a set of polygonal domains~$\mathcal{S} = \{S_i\}$ with $n$ vertices in total. We begin by constructing the graph~$G$ for the polygonal domains~$\{S_i\}$, using the procedure described in Section~\ref{sec:min-exposure-for-two-robots}. Then, we add the starting configuration~$A$ as a vertex to the graph~$G$. Note that $A$ can be thought of as a state where $X$ and $Y$ are single points and $\sigma = 0$. If both robots are contained in covered regions in configuration~$A$, we perform a point location for $a_1$ and $a_2$ to find the state $A'$ in $G$ that contains $A$, and we add a zero exposure edge from $A$ to $A'$. Otherwise, we apply Theorem~\ref{theorem:lp_states} to compute a positive exposure edge from $A$ to every other vertex of~$G$. We add~$B$ as a vertex to the graph~$G$ in the same way. We also add a positive exposure edge between~$A$ and~$B$ and compute its weight using Theorem~\ref{theorem:lp_states}. Finally, we use a shortest path algorithm to compute the minimum weight path from~$A$ to~$B$ in the graph~$G$, and we return this shortest path distance~$d_G(A,B)$ between~$A$ and~$B$. From this shortest path in~$G$, we can reconstruct a minimum exposure schedule between~$A$ and~$B$ using Theorem~\ref{theorem:lp_states} and Theorem~\ref{thm:feasibility-for-two-robots}.

    We prove the correctness of the above algorithm. Consider any optimal solution $M$, i.e., any minimum exposure schedule, between configurations~$A$ and~$B$. The first case is that $M$ never visits a configuration where $R_1$ and $R_2$ are entirely inside a covered region. If so, since our algorithm adds a positive exposure edge between~$A$ and~$B$, and Theorem~\ref{theorem:lp_states} correctly computes the \MMCMP between~$A$ and~$B$ in the case where there are no covered regions, our algorithm outputs an optimal solution. The second case is that $M$ does visit at least one configuration where $R_1$ and $R_2$ are entirely inside covered regions. Define a critical configuration to be a configuration in the optimal solution where the robots $R_1$ and $R_2$ switch from either both being covered to being exposed, or vice versa. More formally, for a critical configuration $T$ at time $t$, we have that both robots are covered in $T$, but there is an $\varepsilon > 0$ such that for all $0 < \varepsilon' \leq \varepsilon$ at least one of the robots is exposed at time $t-\varepsilon'$ or $t+\varepsilon'$.
    Let the set of critical configurations, sorted chronologically, be $\{T_1, T_2, \ldots\}$. The remainder of our proof is to show that our algorithm correctly computes the minimum exposure schedule between $A$ and $T_1$, between $T_i$ and $T_{i+1}$, and between the last $T_j$ and~$B$.
    We can show that each configuration $T_i$ is contained in a state:
    \begin{claim*}\label{lem:config_in_state}
       Any covered configuration $A \in \mathcal{F}_2$, has a state $(X,Y,\sigma)$ in $G$ such that $A$ is in this state.
    \end{claim*}
    \begin{claimproof}
        Let $X$ and $Y$ be the trapezoids in the horizontal decomposition that contain $a_1$ and $a_2$, respectively. If $(X,Y,0) \in G$, then the lemma follows. Otherwise, the convex hull $Z$ of $X \cup Y$ does not contain a $1 \times 1$ square. Let $h$ and $w$ denote the height and width of the bounding rectangle of $Z$.

        If $A \in \mathcal{F}_2^\rightarrow$ or $A \in \mathcal{F}_2^\leftarrow$, then $|x(a_2) - x(a_1)| \geq 1$, and thus $w \geq1$. So, both $(X,Y,1)$ and $(X,Y,-1)$ are in $G$, and $A$ is in one of these states.
        Similarly, if $A \in \mathcal{F}_2^\uparrow$ or $A \in \mathcal{F}_2^\downarrow$, then $|y(a_2) - y(a_1)| \geq 1$, and thus $h \geq1$. So, both $(X,Y,2)$ and $(X,Y,-2)$ are in $G$, and $A$ is in one of these states.
    \end{claimproof}

    Consider a minimum exposure schedule between $A$ and $T_1$. Let $T_1$ be contained in the state $(X,Y,\sigma)$. If there is a zero exposure schedule between~$A$ and $T_1$, then Proposition~\ref{proposition:convex_corners} and Theorem~\ref{thm:feasibility-for-two-robots} imply that we add a zero exposure edge between the state representing~$A$ and $(X,Y,\sigma)$. Otherwise, there is no configuration between~$A$ and~$T_1$ where both robots are covered, and therefore, Theorem~\ref{theorem:lp_states} returns the correct weight for the positive exposure edge between the state representing~$A$ and $(X,Y,\sigma)$. The same argument can be applied to prove the correctness of the minimum exposure schedule between $B$ and the last $T_j$. Therefore, it remains only to prove the correctness between $T_i$ and $T_{i+1}$. Let $(X,Y,\sigma)$ be the state containing~$T_i$ and $(X',Y,\sigma')$ be the state containing~$T_{i+1}$. We have two cases. Either all configurations between $T_i$ and $T_{i+1}$ are covered, or all configurations between $T_i$ and $T_{i+1}$ are exposed. In the first case, the correctness of the \textsc{Feasibility} data structure in Theorem~\ref{thm:feasibility-for-two-robots} implies that all the zero exposure edges in the graph~$G$ are correctly computed, so there will be a sequence of zero exposure edges between $(X,Y,\sigma)$ and $(X',Y,\sigma')$. In the second case, the correctness of Theorem~\ref{theorem:lp_states} implies that the weight of the positive exposure edge between $(X,Y,\sigma)$ and $(X',Y,\sigma')$ is equal to the \MMCMP between $T_i$ and $T_{i+1}$. This completes the proof of correctness.

    Finally, we analyze the runtime. Computing the inner Minkowski sums $\inner(S)$ and the horizontal decompositions takes $\mathcal{O}(n \log n)$ time~\cite{DBLP:books/lib/BergCKO08,DBLP:journals/dcg/LevenS87}. Enumerating all triples $(X,Y,\sigma)$ takes $\mathcal{O}(n^2)$ time. The number of vertices in the graph~$G$ is $\mathcal{O}(n^2)$. Constructing the \textsc{Feasibility} data structure in~\cref{thm:feasibility-for-two-robots} takes $\mathcal{O}(n^2)$ time. For every pair of vertices, querying the \textsc{Feasibility} data structure takes $\mathcal{O}(\log n)$ time, and computing the \MMCMP takes $\mathcal{O}(1)$ time. Therefore, constructing all~$\mathcal{O}(n^4)$ edges in the graph~$G$ takes $\mathcal{O}(n^4 \log n)$ time. Adding~$A$ and~$B$ as vertices to the graph~$G$ takes $\mathcal{O}(n^2 \log n)$ time. Finally, running Dijkstra's algorithm between~$A$ and~$B$ takes $\mathcal{O}(n^4 + n^2 \log n) = \mathcal{O}(n^4)$ time.
\end{proof}


\section{Coordinated motion planning for \boldmath$k$ robots}
\label{sec:k-robots-plane}

In this section, we extend our results for \MSCMP and \MMCMP from~\cref{sec:two-robots-cmp} to any fixed number $k$ of robots.
This represents a generalization and simplification of results by Eiben, Ganian, and Kanj~\cite{eiben.ganian.kanj2023parameterized}, who studied CMP in the integer grid.
A key statement concerning the decision variant of \MMCMP extends directly to our setting, motivating our subsequent results:

\begin{theorem}[Eiben, Ganian, and Kanj~\cite{eiben.ganian.kanj2023parameterized}, Theorem 21]
    \label{thm:np-hardness}
    The decision variant of \MMCMP in the integer grid is \NP-hard, even for constant makespan.
\end{theorem}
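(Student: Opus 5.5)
Since this statement is attributed to Eiben, Ganian, and Kanj, we do not re-prove it from scratch. Instead we sketch how their reduction can be reconstructed, and check that it carries over to our setting as the paper claims. The natural plan is a polynomial-time reduction from a known \NP-hard problem with bounded structure. The candidate we would pick first is \textsc{3-SAT}, or equivalently a restricted variant such as \textsc{(3,B2)-SAT} in which every variable occurs a bounded number of times. We reduce it to the question of whether $k$ vertex robots in a rectangular grid graph can be reconfigured within makespan $\ell=c$, where $c$ is a fixed constant. Since $c$ is fixed, every robot travels at most $c$ steps. Hence all interaction is local, and the construction must encode the whole formula through many small gadgets that are placed in parallel across the grid.

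The gadgets are as follows. For each variable we build a \emph{variable gadget}: a short cycle or corridor of densely packed robots, each of which must move exactly distance $c$ (its target lies at $L_1$ distance $c$). Such a robot has no slack, so each step of its trajectory is forced up to a binary choice, such as clockwise versus counterclockwise rotation of the cycle or ``horizontal first'' versus ``vertical first''. This choice encodes the truth value of the variable. Each clause gets a \emph{clause gadget}: a robot whose target is at distance $c$ through a bottleneck cell. That cell is vacated in time only if at least one adjacent literal gadget has chosen the satisfying orientation. \emph{Wire} gadgets copy values from variable occurrences to clauses. Because of the constant makespan, the wires cannot be long chains of sequential moves. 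Instead, a wire must be realised as a static rigid chain of robots that all move simultaneously, in a direction that is forced by their neighbours.

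With the gadgets in place, the proof has two directions. For soundness, a satisfying assignment yields a schedule: rotate each variable gadget according to its value, propagate along the wires simultaneously, and let each clause robot use a freed bottleneck. For completeness, any schedule of makespan $c$ must respect the tight-distance constraints. These force every gadget into one of its two canonical behaviours, so an assignment can be read off, and the clause bottlenecks guarantee that it satisfies every clause.

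The main obstacle is completeness, i.e., ruling out ``cheating'' schedules. This is delicate because robots with slack, or robots moving through empty cells between gadgets, could break the intended rigidity. We would address it by filling all unused grid space with stationary robots whose target equals their start. Since they need not move but may do so, one must still argue that moving them never helps. This is done by a counting or potential argument: the total $L_1$ displacement budget equals the sum of the distances exactly on the forced robots, and any swap-and-return costs at least $2$, which exceeds the available slack. Finally, embedding the reduction into $\mathbb{R}^2$ with unit squares at integer positions then gives the hardness claimed in the table, since an optimal continuous schedule can be snapped to the grid when the positions are integral.
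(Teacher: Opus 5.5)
You are right to defer to Eiben, Ganian, and Kanj; the paper does the same. It gives no proof of its own, only a summary of their reduction from a $3$\textsc{Sat} variant. In that reduction the ``free'' variable and clause robots are confined to corridors formed by other robots that must reach their designated positions on time, so the free robots are forced to move at every step. In a second phase the free robots reach their targets without further interaction.

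Your reconstruction is a different design, and the step you rely on to rule out cheating fails. The counting argument in your last paragraph is a \textsc{Min-Sum} argument. The decision version of \MMCMP has no total $L_1$ displacement budget, only the per-robot deadline $c$. A filler robot whose target equals its start therefore has slack $c$, not $0$. For any $c\geq 2$ it can step into a neighbouring cell that is vacated, or part of a train, and return before the deadline. That is exactly the kind of shortcut you need to exclude.

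Relatedly, zero slack alone does not reduce a robot's options to a binary choice: a robot at distance $c$ from its target may follow any monotone lattice path. The restriction to two canonical behaviours must come from the surrounding walls, which is precisely where the slack of stationary fillers breaks rigidity. The construction summarised in the paper avoids this by building the corridors from robots on tight schedules toward their own targets. Rigidity then comes from timing, not from a displacement budget. As written, your gadgets are unspecified and their key soundness argument is invalid, so the proposal does not establish the statement beyond the citation itself.

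Your closing remark about $\mathbb{R}^2$ goes beyond the statement, which concerns only the integer grid, and it is also unjustified. The grid model lets a robot enter a vertex while its previous occupant leaves in a perpendicular direction. For unit squares this collides: centers $(1+t,0)$ and $(1,-1+t)$ are at $L_\infty$-distance $\max(t,1-t)<1$ for $0<t<1$. So grid schedules do not transfer verbatim, and snapping continuous schedules to the grid is not automatic either. The paper instead argues the transfer from the two-phase structure of this particular reduction, together with the optimality of rectilinear trajectories from~\cref{thm:min-makespan-fpt}.
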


In their reduction from a variant of $3$\textsc{Sat}, each variable and clause gadget operates in two phases.
In the first phase, the ``free'' variable and clause robots are confined to move only within corridors formed by other robots that move to their designated positions; to reach these positions in time, they must move permanently.
In the second phase, the free robots can move to their targets without any further interactions with other robots.
The robots are forced to move according to rectilinear trajectories in their model, and we will show due to~\cref{thm:min-makespan-fpt}, that this is also optimal for our setting.
Consequently, the possibility of continuous motion (permitted in our model) in either phase yields no additional benefit.
\smallskip

We show that both the \MMCMP and the \MSCMP problem are \FPT parameterized by the number of robots, based on a generalization of the transition graph from~\cref{sec:two-robots-cmp}.

\subparagraph{Transition graph.}
Every feasible configuration in $\mathcal{F}_k$ permits an axis-parallel bisection between \emph{each pair} of robots, and an \emph{ordering} for $k$ robots specifies this relation.
There are thus~${4^{\binom{k}{2}}}$ possible orderings.
Note that not every possible ordering is feasible.
For example, if~$R_2$ is right of $R_1$, and $R_3$ is right of $R_2$, then $R_3$ cannot be left of~$R_1$.
Two orderings~$O_1,O_2$ are adjacent in the \emph{transition graph} exactly if for every pair of robots $R_i,R_j$, the relative orderings of $R_i$ and $R_j$ in $O_1$ and $O_2$ are adjacent in the transition graph of $\mathcal{F}_2$.
In other words, the orderings do not have opposing relative orders for any pair of robots.
This implies that the degree of every vertex in the transition graph is  bounded by~${3^{\binom{k}{2}}}$.

We start by showing that for any two commonly ordered configurations of $k$ robots, there exists an optimal schedule in which every robot makes at most one turn.

\begin{lemma}
    Let $A,B\in\mathcal{F}_k$ be commonly ordered configurations of $k$ robots.
    There exists a schedule with makespan~${\diam{A,B}}$ in which each robot makes at most one turn.
    \label{lem:k-robots-min-makespan-same-ordering}
\end{lemma}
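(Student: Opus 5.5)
The plan is to reduce the statement to \cref{lem:two-robots-min-makespan-same-ordering}. Feasibility of a schedule for $k$ robots is a pairwise condition: $M(t)\in\mathcal{F}_k$ holds exactly if $\norm{m_i(t)-m_j(t)}_\infty\geq 1$ for all $i\neq j$. So it suffices to find one schedule for all $k$ robots whose restriction to every pair $(R_i,R_j)$ is exactly the two-robot schedule built in the proof of \cref{lem:two-robots-min-makespan-same-ordering}.

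Concretely, set $d=\diam{A,B}$ and use the time interval $[0,d]$. Each robot $R_i$ follows the same three-step trajectory as in the two-robot proof:
\begin{enumerate}
    \item move along the $x$-axis at unit speed toward $x(b_i)$, starting at time $0$;
    \item wait until time $d-\abs{y(b_i)-y(a_i)}$;
    \item move along the $y$-axis at unit speed toward $y(b_i)$, arriving exactly at time $d$.
\end{enumerate}
Since $d\geq\norm{b_i-a_i}$ for every $i$, the waiting time is nonnegative. Each robot makes at most one turn, and the makespan is exactly $d$. This trajectory of $R_i$ depends only on $a_i$, $b_i$ and $d$, not on which other robot it is paired with. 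That independence is the key point: a single global rule works for every pair at once.

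Next I check each pair. Because $A$ and $B$ are commonly ordered in the $k$-robot sense, for every pair $R_i,R_j$ there is one relation (left/right/above/below) that holds in both $A$ and $B$. Hence $(a_i,a_j)$ and $(b_i,b_j)$ are commonly ordered two-robot configurations. Moreover $d\geq\diam{(a_i,a_j),(b_i,b_j)}$, since $d$ is the maximum over all $k$ robots. Therefore the restriction of the schedule to $\{R_i,R_j\}$ is precisely the two-robot schedule of \cref{lem:two-robots-min-makespan-same-ordering} with parameter $d$, which that lemma shows to be collision-free. This holds whether the common relation is horizontal (case (i) of that proof, with the $x$-moves synchronized at the start) or vertical (case (ii), where in the reversed schedule the $y$-moves all start at time $0$ at unit speed).

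The main obstacle I would double-check is exactly this uniformity. Different pairs may be separated along different axes, yet every robot must use the same ``$x$ first, then $y$'' trajectory. The two-robot proof already handles both axis cases for this single fixed trajectory shape (case (ii) via the reverse), so no per-pair change to the trajectories is needed. I would verify carefully that case (ii) relies only on the $y$-moves being right-aligned to the common end time $d$, and not on the robots' total distances being equal. With that checked, pairwise feasibility at every time $t$ gives $M(t)\in\mathcal{F}_k$, which proves the lemma.
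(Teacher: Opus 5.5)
Your proposal is correct and follows the paper's proof essentially verbatim. The paper defines the same global three-step trajectory for every robot with $d=\diam{A,B}$, and observes that its restriction to any pair $R_i,R_j$ coincides with the schedule of \cref{lem:two-robots-min-makespan-same-ordering} for makespan $d$ on that pair, hence is collision-free. Your explicit checks that each pair is commonly ordered and that $d$ dominates the pairwise diameter are implicit in the paper. Your concern about case (ii) resolves as you expect: in the reversed schedule all $y$-moves start simultaneously at unit speed, regardless of the robots' total distances.
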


\begin{proof}
    We generalize the schedule of Lemma~\ref{lem:two-robots-min-makespan-same-ordering}. Let $d$ refer to the diameter $d(A,B)$.
    We provide a schedule $M$ over the time interval $[0,d]$, where each robot $R_i$, $i \in\{1,\ldots,k\}$ follows these three steps:
    \begin{enumerate}
        \item Move along the $x$-axis towards $x(b_i)$ at unit speed.
        \hfill$\abs{x(b_i)-x(a_i)}$
        \item Wait there until the total elapsed time is $d-\abs{y(b_i)-y(a_i)}$.
        \hfill$d - \norm{b_i - a_i}$
        \item Move along the $y$-axis towards $y(b_i)$ at unit speed.
        \hfill$\abs{y(b_i)-y(a_i)}$
    \end{enumerate}
    The makespan of this schedule is trivially equal to $d$; it remains to prove that this schedule is collision-free. Consider any two robots $R_i$ and $R_j$, and let $A',B'$ be the start and end configurations of just these two robots. Let $M'=(m_i',m_j')$ be the schedule produced by Lemma~\ref{lem:two-robots-min-makespan-same-ordering} for makespan $d$ on $A',B'$. Then $M' = (m_i,m_j)$, so Lemma~\ref{lem:two-robots-min-makespan-same-ordering} implies that there is no collision in the schedule $M$ between $R_i$ and $R_j$.
\end{proof}

It follows that there exists an optimal schedule which follows a simple path in the transition graph of $k$ robots.
After bounding the size of the transition graph by a function of~$k$, we can apply the techniques from~\cref{sec:two-robots-cmp} to compute optimal schedules in \FPT~time.

\begin{theorem}
    \MMCMP is \FPT parameterized by the number of robots.
    \label{thm:min-makespan-fpt}
\end{theorem}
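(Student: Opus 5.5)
The plan mirrors the proof of \cref{thm:mmcmp-two-robots}: reduce to simple paths in the transition graph for $k$ robots, and solve one linear program per path.

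\textbf{Step 1: restrict to simple paths.} By \cref{lem:k-robots-min-makespan-same-ordering}, any two commonly ordered configurations can be connected by a schedule of makespan exactly their diameter, and the diameter is a lower bound. Take an optimal schedule that leaves an ordering $O$ and later returns to it, say at configurations $P$ and $Q$ (both in $O$). The portion between $P$ and $Q$ has makespan at least $\diam{P,Q}$. Replace it by the one-turn schedule of \cref{lem:k-robots-min-makespan-same-ordering}. The makespan does not increase, because we concatenate time intervals and each piece is bounded below by the diameter of its endpoints. Repeating this replacement yields an optimal schedule whose sequence of visited orderings is a simple path in the transition graph.

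\textbf{Step 2: bound the search space.} The transition graph has at most $4^{\binom{k}{2}}$ vertices, and every vertex has degree at most $3^{\binom{k}{2}}$. Hence the number of simple paths is at most $g(k)$ for some computable $g$, and each has length at most $4^{\binom{k}{2}}$. We enumerate all simple paths that start at an ordering containing $A$ and end at one containing $B$. Infeasible orderings are harmless: the corresponding LPs are simply infeasible, so we need not characterize them.

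\textbf{Step 3: one LP per path.} For a path $O_0,O_1,\dots,O_r$, introduce intermediate configurations $I_1,\dots,I_r$ with $I_j\in O_{j-1}\cap O_j$. Each membership is a set of $\binom{k}{2}$ linear constraints of the form $x(i_a)\ge x(i_b)+1$ or $y(i_a)\ge y(i_b)+1$. Set $I_0=A$ and $I_{r+1}=B$, and introduce a variable $\phi_j$ for each segment with $\phi_j\ge \norm{i^{(j)}_a-i^{(j+1)}_a}$ for every robot $a$. Minimize $\sum_j\phi_j$. The $L_1$ norms are linearized with auxiliary variables, as in \cref{thm:mmcmp-two-robots}.

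This objective is achievable. Consecutive configurations $I_j,I_{j+1}$ both lie in $O_j$, so \cref{lem:k-robots-min-makespan-same-ordering} (in its general form for any $d\ge$ diameter, as in \cref{lem:two-robots-min-makespan-same-ordering}) realizes segment $j$ in time $\phi_j$. Conversely, an optimal schedule following this path passes through such diagonal configurations at its transitions, so the optimum of the LP is at most the optimal makespan. The LP has $O(k\cdot 4^{\binom{k}{2}})$ variables and constraints and is solvable in time polynomial in its size and in the bit-length $n$ of the input. The total running time is therefore $f(k)\cdot n^{O(1)}$.

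\textbf{Main obstacle.} The delicate point is the claim in Step 3 that an optimal schedule following a path of orderings can be cut at configurations lying in $O_{j-1}\cap O_j$. When the schedule moves from ordering $O_{j-1}$ to $O_j$, possibly several pairwise relations change at once, and the transition instant must lie in both orderings simultaneously.

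To settle this, I would argue via closedness. Each ordering is a closed set, and the schedule is continuous. Consider the last time the schedule is in $O_{j-1}$ before entering $O_j$; by closedness this configuration lies in $O_{j-1}$ and also in the closure of the part that follows, hence in $O_j$.

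More robustly, I would redefine the path of a schedule as a sequence of orderings with intersection points chosen this way. The configuration space is covered by finitely many closed orderings, so a continuous schedule can be partitioned into finitely many intervals. The endpoints of these intervals lie in consecutive intersections, which makes consecutive orderings adjacent in the transition graph. If a schedule does not decompose finitely, I would first perturb it or invoke polygonality of trajectories. The removal of repetitions in Step 1 then applies to this finite sequence.
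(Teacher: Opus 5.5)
Your proposal is correct and follows essentially the paper's proof: reduce to simple paths in the transition graph via \cref{lem:k-robots-min-makespan-same-ordering}, enumerate the boundedly many candidate paths, and solve one LP per path whose intermediate configurations lie in intersections of consecutive orderings, minimizing the sum of per-segment diameters. Your extra care about how a schedule ``follows'' a path covers a point the paper skips, but a finite partition into intervals each inside one ordering need not exist for an arbitrary continuous schedule; what suffices is a finite chain of times whose consecutive configurations are commonly ordered (by compactness, since the ordering relations are closed conditions), after which the triangle inequality for $d(\cdot,\cdot)$ gives the LP lower bound and lets you delete repeated orderings.
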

\begin{proof}
    Let $A,B\in\mathcal{F}_k$ be two feasible configurations of $k$ robots.
    By~\cref{lem:k-robots-min-makespan-same-ordering}, there exists a minimum-makespan schedule from $A$ to $B$ which does not re-enter an ordering, i.e., that follows a simple path in the transition graph.
    We proceed in two steps, first showing that if we can correctly guess a simple path in the transition graph that an optimal schedule follows, we can determine that schedule in $\mathcal{O}(g(k))$ time for some computable function $g$.
    Then we show that we can find an optimal path deterministically, using exhaustive search to enumerate at most $h(k)$ candidates for some computable function $h$.

    \begin{claim}
        Given a pair of configurations $A,B\in\mathcal{F}_k$ and a sequence of $\ell$ orderings in the transition graph, we can determine a schedule with minimal makespan that obeys the given sequence in $\mathcal{O}((k^2\ell)^{2.5})$ time, if it exists.
        \label{clm:fpt-schedule-according-to-path}
    \end{claim}
    \begin{claimproof}
        We formulate this problem as a linear program using $\mathcal{O}(k\ell)$ many variables and constraints; our approach is almost identical to that in~\cref{thm:mmcmp-two-robots}.
        To find a schedule that obeys the sequence of orderings, we need only to determine an intermediate configuration in the intersection of successive orderings.
        Between every pair of successive configurations, we can then find a schedule that matches the lower bound (the diameter) due to~\cref{lem:k-robots-min-makespan-same-ordering}.

        We denote the $\ell$ intermediate configurations by $I^1,\ldots, I^\ell$ with $I^u = (i^u_1,\ldots, i^u_k)$ for $u\in[\ell]$ and
        denote by $I^0=A$ and $I^{\ell+1}=B$ the start and target configurations.
        To measure the makespan between intermediate configurations, we split the objective $\phi$ into $\ell+1$ terms:
        \begin{align}
            \text{minimize}\qquad\phi& \notag\\
            \text{subject to}\qquad\phi& =\:\phi^0 + \phi^1 + \ldots + \phi^{\ell}&&\texttt{(makespan is sum of transition terms)}.\notag\\
            \intertext{%
                The variable $\phi^u$ defines exactly the elapsed time between configuration $I^u$ and $I^{u+1}$:
            }
            \phi^u &\geq \norm{i^u_1 - i^{u+1}_1} &&\texttt{(lower bound for $R_1$)}\notag\\
            \phi^u &\geq\quad\ldots &&\ldots\label{eq:makespan-lp-intermediate-constraints}\\
            \phi^u &\geq \norm{i^u_k - i^{u+1}_k} &&\texttt{(lower bound for $R_k$)}.\notag
        \end{align}
        It remains to force each intermediate configuration into the intersection of two specific orderings as defined by the input sequence.
        For every intermediate configuration $I^u$ and every pair of robots $R_s$ and $R_t$, we add two inequality term resembling, e.g., $x(i^u_s) \geq x(i^u_t)+1$ to enforce the separators dictated by the orderings.
        The resulting linear program then has $\mathcal{O}(k^2\ell)$ many variables and (in)equalities.
        Using a conventional LP solver as in~\cite{cohen.lee.song2019solvinglinearprograms}, we can thus compute the schedule in $\mathcal{O}(k^5\ell^{\nicefrac{5}{2}})$ time.
    \end{claimproof}

    As the length $\ell$ of a simple path of the transition graph cannot exceed the number of vertices $4^{\binom{k}{2}}$, the worst-case runtime due to~\cref{clm:fpt-schedule-according-to-path} is bounded by $g(k)\in\mathcal{O}(2^{\nicefrac{5}{2}(k-1)k} k^5)$.

    It remains to argue that we can enumerate all relevant paths in time bounded by a function of $k$.
    Recall that the transition graph has degree at most $3^{\binom{k}{2}}$ and, consequently, has~$\abs{E}\leq \nicefrac{1}{2} \cdot 3^{\binom{k}{2}}4^{\binom{k}{2}}$ edges, so the number of paths is trivially bounded by $2^{\abs{E}}$.
    We~can then exhaustively enumerate elements of the power set of $\abs{E}$, determining whether a given edge set is a simple path in $\mathcal{O}(4^{\binom{k}{2}})$ time using, e.g., BFS, and finally verifying that the orderings at the start and end vertices of a given path contain the two configurations~$A$ and~$B$, respectively, which takes $\mathcal{O}({\binom{k}{2}})$ time.
    The total runtime of our algorithm using exhaustive search to find and eliminate candidate paths is bounded by $\mathcal{O}(h(k)\cdot g(k))$:
    \begin{align*}
        &\mathcal{O}\bigl(
        \underbrace{2^{\nicefrac{1}{2}\cdot 3^{\binom{k}{2}}4^{\binom{k}{2}}}\cdot 4^{\binom{k}{2}}}_{\text{$h(k)$}}
        \:\cdot\:
        \underbrace{2^{\nicefrac{5}{2}(k-1)k} k^5}_{\text{$g(k)$}}
        \bigr)\\
        =\quad&\mathcal{O}(2^{\nicefrac{1}{2}\cdot 3^{\nicefrac{1}{2}(k-1)k} 2^{(k-1)k}}\cdot 2^{\nicefrac{5}{2}(k-1)k} k^5\cdot 2^{(k-1)k})\\
        =\quad&\mathcal{O}(2^{\nicefrac{1}{2} \cdot 3^{\nicefrac{1}{2}(k-1)k} 2^{(k-1)k}}\cdot 2^{\nicefrac{7}{2}(k-1)k}\cdot k^5).\qedhere
    \end{align*}
\end{proof}

The same principles can be directly transferred to the \MSCMP problem, requiring only minor changes to the linear program.
While the total number of variables and constraints is actually smaller than for \MMCMP, the asymptotic runtime remains identical.

\begin{theorem}
    \MSCMP is \FPT parameterized by the number of robots.
    \label{thm:min-sum-fpt}
\end{theorem}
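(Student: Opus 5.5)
The plan mirrors the proof of \cref{thm:min-makespan-fpt}, with the makespan objective replaced by total traveled length.

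\textbf{Step 1: the sum lower bound is attained within an ordering.} For commonly ordered $A,B\in\mathcal{F}_k$, I claim the schedule of \cref{lem:k-robots-min-makespan-same-ordering} has total length exactly $\minSum{A,B}$. In that schedule each robot $R_i$ moves only monotonically along the $x$-axis and then along the $y$-axis, so it travels exactly $\norm{a_i-b_i}$. Feasibility was already established in \cref{lem:k-robots-min-makespan-same-ordering} via the pairwise reduction to \cref{lem:two-robots-min-makespan-same-ordering}. This gives the $k$-robot analogue of \cref{lem:two-robots-min-sum-same-ordering}.

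\textbf{Step 2: simple paths suffice.} Suppose a schedule leaves an ordering at configuration $P$ and later re-enters it at configuration $Q$. The portion between $P$ and $Q$ moves each robot $R_i$ a length of at least $\norm{p_i-q_i}$. Replacing this portion by the Step~1 schedule from $P$ to $Q$ therefore does not increase total length. Repeating the replacement yields an optimal schedule that follows a simple path in the transition graph.

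One subtlety is the existence of an optimum rather than just an infimum. I would handle it by observing that the LP below is bounded and feasible whenever the path is realizable, so its minimum is attained and gives a concrete optimal schedule.

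\textbf{Step 3: solve the LP for a fixed path.} Given a sequence of $\ell$ orderings, introduce intermediate configurations $I^1,\dots,I^\ell$ with $I^0=A$ and $I^{\ell+1}=B$, exactly as in \cref{clm:fpt-schedule-according-to-path}. The objective is to minimize
\[
\Sigma=\sum_{u=0}^{\ell}\sum_{s=1}^{k}\norm{i^u_s-i^{u+1}_s},
\]
where each $L_1$ term is linearized with auxiliary variables for its absolute values. The separator constraints are the same as before, forcing each $I^u$ into the intersection of the two prescribed successive orderings. This program has $\mathcal{O}(k^2\ell)$ variables and constraints, and the per-transition $\phi^u$ maxima are no longer needed. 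Its optimum equals the best total length over schedules following that path, because:
\begin{itemize}
\item every such schedule passes through a feasible choice of the $I^u$ and pays at least $\Sigma$ by the triangle inequality;
\item conversely, Step~1 realizes $\Sigma$ exactly.
\end{itemize}

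\textbf{Step 4: enumerate paths.} Enumerate all candidate simple paths as in \cref{thm:min-makespan-fpt}, and check that $A$ and $B$ lie in the start and end orderings. Output the minimum LP value over all paths. The runtime is bounded by $h(k)\cdot g(k)$ with the same functions as before.

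I expect the main point needing care to be Step~1/Step~2. It must be checked that a robot's travel in the single-turn schedule equals its $L_1$ displacement, and that the splicing argument in Step~2 is valid for the sum objective. The splice is valid because the sum is additive over time intervals and each robot's segment length is at least its displacement. Everything else carries over verbatim from \cref{thm:min-makespan-fpt}.
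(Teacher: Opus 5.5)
Your proposal is correct and follows essentially the same route as the paper: reduce to simple paths in the transition graph using the one-turn same-ordering schedule, which is simultaneously length-optimal, then for each enumerated path solve the makespan LP with the per-transition maxima replaced by a sum of linearized $L_1$ distances. Your explicit splice argument for the sum objective and your objective $\sum_{u=0}^{\ell}$ clarify the paper's sketch rather than depart from it; note that the latter, unlike the paper's displayed LP, includes the leg from $A$ to $I^1$.
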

\begin{proof}
    The proof is analogous to that of~\cref{thm:min-makespan-fpt}; we exhaustively enumerate possible sequences of orderings and determine the best schedule for each sequence using a linear program.
    We now briefly outline the necessary changes to~\cref{clm:fpt-schedule-according-to-path}.
    In particular, we drop the constraints from~\cref{eq:makespan-lp-intermediate-constraints} and replace the initial equality term for our objective.
    \begin{align*}
        \text{minimize}\qquad\Sigma&\\
        \text{subject to}\qquad\Sigma&=&&\norm{i^1_1 - i^{2}_1} + \ldots + \norm{i^\ell_1 - i^{\ell+1}_1}&&\texttt{(distance traveled by $R_1$)}\\
        &&+&\norm{i^1_2 - i^{2}_2} + \ldots + \norm{i^\ell_2 - i^{\ell+1}_2}&&\texttt{(distance traveled by $R_2$)}\\
        &&&\cdots\\
        &&+&\norm{i^1_k - i^{2}_k} + \ldots + \norm{i^\ell_k - i^{\ell+1}_k}&&\texttt{(distance traveled by $R_k$)}.
    \end{align*}
    Taking into account the ordering-enforcing constraints imposed on each of the intermediate configurations, the size of this linear program is asymptotically identical to that from~\cref{clm:fpt-schedule-according-to-path}, yielding the same runtime for the full algorithm; we refer to the analysis in~\cref{thm:min-makespan-fpt}.
\end{proof}

\subsection{Extension to related settings}
\label{subsec:k-robots-other-settings}
Our results for both the \MSCMP and \MMCMP problems have direct extensions to two closely related variants of \CMP.
While some of the following statements are known due to related work, our algorithm offer significantly simpler and faster approaches.

\subparagraph{Rectangular robots in the plane.}
In this setting, each robot $R_i$ takes the shape of a rectangle of size $w_i\times h_i$ rather than the unit square $\unitsquare$; the model otherwise remains the same.
Our algorithm is applicable with only minor adjustments:
In particular, a separating axis-parallel bisector must exist for every pair of robots, meaning that no technical changes to the arguments of~\cref{lem:k-robots-min-makespan-same-ordering} are necessary.
Similarly, the linear programs in the proofs of~\cref{thm:min-makespan-fpt,thm:min-sum-fpt} require only minor changes.
To~enforce a bisector such that $R_i$ is, for instance, to the right of $R_j$ in a configuration $P$, we simply add a constraint of the form $x(p_i)\geq x(p_j) + \nicefrac{1}{2}(w_i+w_j)$.
A related variant of this problem has been studied in~\cite{kanj.salman2024parameterized}, where an \FPT algorithm for the number of straight-line moves necessary is given.
We note that, while this algorithm extends to an \FPT algorithm for \MSCMP according to~\cite{eiben.kanj.parsa2025motion}, the authors' proposed runtime exceeds that of our algorithm.

\subparagraph{Extension to the integer grid.}
\label{subsec:k-robots-grid}
All of our methods have straight-forward extensions to the integer grid.
In this setting, each of the robots $R_1,\ldots, R_k$ occupies each at a unique vertex~$p_{i\in\mathbb{Z}^2}$ of the grid.
We denote the feasible grid configurations by $\mathcal{G}_k\subset\mathcal{F}_k$.

Robots can then change their position in discrete time steps.
During each step, a robot can move to an adjacent ($L_1$-distance $1$) vertex in the grid exactly if (i) no other robot is attempting to move to the same vertex and (ii) this would not constitute a swap with a robot at the target vertex.
For a detailed model description, we refer to~\cite{eiben.ganian.kanj2023parameterized}.

To demonstrate the applicability of our results to the grid setting, we highlight that both the transition graph of $\mathcal{F}_k$ and~\cref{lem:k-robots-min-makespan-same-ordering} do not rely at all on the configuration space being continuous, transferring immediately to $\mathcal{G}_k$.
For the case of the \emph{infinite} integer grid, it follows that~\cref{thm:min-makespan-fpt,thm:min-sum-fpt} are immediately applicable.
Furthermore, the inclusion of rectangular constraints as outlined in~\cref{subsec:moving-between-rectangles} allows us to restrict the configuration space from $\mathcal{G}_k$ to arbitrary solid grid graphs.
It should be noted, however, that switching from polynomial-time solvable linear programs (LP) to integer linear program~(ILP) of the same size incurs a significant overhead, as solving ILPs is \NP-hard in general~\cite{papadimitriou1981integerprogramming}.

In~\cite[Theorems 14 and 19]{eiben.ganian.kanj2023parameterized}, the authors prove that all instances of \MSCMP and \MMCMP in the grid permit optimal solutions in which each robot makes at most $\rho(k)$ turns for some fixed $\rho(k)>k^k$.
Due to the bounded diameter of the transition graph of $\mathcal{G}_k$, we can immediately derive an upper bound of $2\cdot 4^{\binom{k}{2}}$ on the necessary number of turns made by each robot in an optimal solution.
As our approach, like theirs, relies on the enumeration of combinatorial snapshots of turns made by each robots, this bound directly reduces the search space size and implies a significant improvement in efficiency.

    \section{Minimum exposure CMP with constantly many robots}
\label{sec:min-exposure-xp-algorithm}

Our approach for $k$ robots \MECMP is very similar to that for two robots. We again define a graph~$G$ such that the shortest path distance $d_G(A,B)$ between a pair of states~$A$ and~$B$ is equal to the \MECMP between states~$A$ and~$B$.
As the states in~\cref{def:state} are specific to the two robot case, we start by generalizing states to the $k$ robot case.

\begin{definition}
    \label{def:k_robot_state}
    Let $k \in \mathbb N$. A \emph{state} is a pair of tuples $(\mathcal X,\Sigma)$ where $|\mathcal X| = k$ and $|\Sigma| = \binom{k}{2}$. The $k$-tuple $\mathcal X = (X_i)_{1 \leq i \leq k}$ consists of $k$ horizontal trapezoids such that $X_i$ and $X_j$ are either interior disjoint or the same trapezoid. The $\binom k 2$-tuple $\Sigma = (\sigma_{i,j})_{1 \leq i < j \leq k}$ consists of $\binom{k}{2}$ integers $\sigma_{i,j}$, where each $\sigma_{i,j} \in \{-2,-1,0,1,2\}$. A configuration $P \in \mathcal F_k$ is in state~$(\mathcal X, \Sigma)$ if:
    \begin{itemize}
        \item $p_i \in X_i$ for all $1 \leq i \leq k$,
        \item for all $1 \leq i < j \leq k$, we have 
        \begin{itemize}
            \item if $\sigma_{i,j} = 0$, then there is no restriction on the relative position of $p_i$ and $p_j$,
            \item if $\sigma_{i,j} = 1$, then $x(p_i) \leq x(p_j) + 1$,
            \item if $\sigma_{i,j} = -1$, then $x(p_i) \geq x(p_j) -1$,
            \item if $\sigma_{i,j} = 2$, then $y(p_i) \leq y(p_j) + 1$, and
            \item if $\sigma_{i,j} = -2$, then $y(p_i) \geq y(p_j) - 1$.      
        \end{itemize}
    \end{itemize}
\end{definition}

Since the conditions $p_i \in X_i$ and $\sigma_{i,j}$ can be written as linear inequalities, it is not too difficult to combine~\cref{theorem:lp_states,thm:min-makespan-fpt} to obtain the following.

\begin{corollary}
    \label{theorem:states_fpt}
    \MMCMP between any two \emph{states} is \FPT parameterized by the number of robots.
\end{corollary}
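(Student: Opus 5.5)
The plan is to replace the fixed endpoints $A$ and $B$ in the linear program of \cref{clm:fpt-schedule-according-to-path} by variable configurations $I^0$ and $I^{\ell+1}$. These are then constrained by linear inequalities to lie in the given start state $(\mathcal X,\Sigma)$ and target state $(\mathcal X',\Sigma')$. This mirrors how \cref{theorem:lp_states} lifted \cref{thm:mmcmp-two-robots} to states. Each trapezoid $X_i$ is an intersection of at most four half-planes, so $p_i\in X_i$ gives $\mathcal{O}(1)$ linear constraints per robot. Each $\sigma_{i,j}\neq 0$ gives one further linear constraint. Altogether the start and target states contribute $\mathcal{O}(k^2)$ constraints, and the LP size stays $\mathcal{O}(k^2\ell)$.

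The one subtlety is that a state with some $\sigma_{i,j}=0$ (or with $\sigma_{i,j}$ not pinning down a unique pairwise order) contains configurations of several orderings. Hence the first and last vertices of the guessed path in the transition graph are not determined by the endpoints. I handle this exactly as in \cref{theorem:lp_states}. Every simple path in the transition graph is enumerated as in \cref{thm:min-makespan-fpt}, with no fixed start or end ordering. For each path, I add the pairwise separator constraints of its first ordering to $I^0$ and those of its last ordering to $I^{\ell+1}$. If the guessed ordering is incompatible with the state, the LP is simply infeasible and is discarded. The output is the minimum over all feasible LPs.

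For correctness, let $M$ be an optimal schedule between some configurations $P$ in the start state and $Q$ in the target state. Then $M$ is also an optimal schedule between $P$ and $Q$ themselves. By \cref{lem:k-robots-min-makespan-same-ordering} and the shortcut argument following it, we may assume $M$ follows a simple path in the transition graph, starting in an ordering containing $P$ and ending in one containing $Q$. The LP for that path admits $I^0=P$, $I^{\ell+1}=Q$ together with the intermediate diagonal configurations of $M$ as a feasible solution, so its value is at most $\makespan{M}$. Conversely, any LP solution yields a feasible schedule of exactly that makespan by applying \cref{lem:k-robots-min-makespan-same-ordering} between consecutive commonly ordered configurations.

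The running time is the enumeration bound $h(k)$ from \cref{thm:min-makespan-fpt} multiplied by the LP time $g(k)$, which is polynomial in the LP size. This is a function of $k$ alone times the polynomial cost of reading the input coordinates, hence \FPT.

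The main obstacle is making sure that letting the endpoints vary inside states does not break the step ``optimal schedules follow simple paths''. That step only ever rewrites the schedule between two fixed configurations, so it should carry over directly once $P$ and $Q$ are fixed to be the endpoints of an optimal schedule, which is how the correctness argument above is set up.
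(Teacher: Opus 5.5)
Your proposal is correct and takes essentially the paper's route: the paper's one-line justification is to combine the state LP of \cref{theorem:lp_states} (trapezoid and $\sigma$ constraints written as linear inequalities, with the start and end orderings enumerated) with the simple-path enumeration of \cref{thm:min-makespan-fpt}, and your write-up spells that combination out. Your correctness argument does not actually need an optimal schedule to exist. Applied to an arbitrary schedule between the two states, it shows that the minimum over the finitely many LPs is a lower bound, and that minimum is attained.
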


We proceed in an analogous way to Section~\ref{sec:min-exposure-for-two-robots} to define the graph~$G$. 

\subparagraph{The vertices of \boldmath$G$.} For each polygonal domain $S_i \in \mathcal S$, compute its inner Minkowski sum $\inner(S_i)$. Let $W$ be the set of trapezoids in all horizontal decompositions of $\{\inner(S_i) | S_i \in \mathcal S\}$. Fix~${\mathcal X = \{X_i\}_{i=1}^k}$ to be a $k$-tuple of trapezoids in~$W$. For all $1 \leq i < j \leq k$, we define~${\Sigma_{i,j}}$, which we will then use to define the states~$(\mathcal X, \Sigma)$ that~$\mathcal X$ participates in. Let~$Z_{i,j}$ be the convex hull of $X_i \cup X_j$. We have four cases:
\begin{itemize}
    \item If $Z_{i,j}$ contains a $1 \times 1$ square, then $\Sigma_{i,j} = \{0\}$.
    \item If $Z_{i,j}$ does not contain a $1 \times 1$ square, and $w\geq 1$ and $h \geq 1$, then $\Sigma_{i,j} = \{-1,1,-2,2\}$.
    \item If $Z_{i,j}$ does not contain a $1 \times 1$ square, and $w\geq 1$ and $h < 1$, then $\Sigma_{i,j} = \{-1,1\}$.
    \item If $Z_{i,j}$ does not contain a $1 \times 1$ square, and $w < 1$ and $h \geq 1$, then $\Sigma_{i,j} = \{-2,2\}$.
\end{itemize}

For a state $(\mathcal X, \Sigma)$, we have that $\Sigma$ is $\binom k 2$-tuple $\{\sigma_{i,j}\}_{1 \leq i < j \leq k}$ where $\sigma_{i,j} \in \Sigma_{i,j}$. To define all states involving~$\mathcal X$, we construct the Cartesian product of the $\binom k 2$ sets $\{\Sigma_{i,j}: 1 \leq i < j \leq k\}$. Then, for every $\binom k 2$-tuple $\Sigma$ in this Cartesian product, we add the state $(\mathcal X, \Sigma)$ as a vertex in the graph~$G$. We repeat this for all $\mathcal{O}(n^k)$ choices of $k$-tuples~$\mathcal X \in W^k$.
This completes the construction of the vertices of~$G$.

\subparagraph{The edges of \boldmath$G$.} Recall that the edges of~$G$ are divided into two types: zero exposure edges and positive exposure edges. For constructing these edges, we adapt the approach in Section~\ref{sec:min-exposure-for-two-robots}. For the zero exposure edges, we replace the two robot \textsc{Feasibility} algorithm in Theorem~\ref{thm:feasibility-for-two-robots} with the $k$ robot \textsc{Feasibility} algorithm in Sharir and Sifrony~\cite{SharirS91_feasibility}, which has $\mathcal{O}(n^k)$ (i.e. \XP) running time. For the positive exposure edges, we replace the two robot \MMCMP algorithm in Theorem~\ref{theorem:lp_states} with the~$k$ robot \MMCMP algorithm in Theorem~\ref{theorem:states_fpt}. This completes the construction of~$G$. By running a shortest path algorithm on the graph~$G$, we can compute a minimum exposure schedule.

\begin{theorem}
\MECMP is slicewise polynomial (\XP) in the number of robots.

\label{thm:me-xp-algorithm}
\end{theorem}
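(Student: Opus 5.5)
The proof mirrors that of \cref{thm:mecmp-for-two-robots}. It has two parts: a bound on the size of $G$ and the cost of building it, and a correctness argument via critical configurations.

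\textbf{Size and construction cost.} $W$ contains $\mathcal{O}(n)$ trapezoids, so there are $\mathcal{O}(n^k)$ tuples $\mathcal X\in W^k$. Each tuple spawns at most $4^{\binom k2}$ tuples $\Sigma$. Hence $G$ has $N=\mathcal{O}(4^{\binom k2}n^k)$ vertices and $\mathcal{O}(N^2)$ edges.

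For every pair of vertices, we pick representative configurations as in the two-robot case and proceed as follows. First, we run one $k$-robot \textsc{Feasibility} test in the relevant cover domains. Sharir and Sifrony~\cite{SharirS91_feasibility} give this in $\mathcal{O}(n^k)$ time, or $n^{\mathcal{O}(k)}$ with preprocessing. If the test fails, we instead make one call to \cref{theorem:states_fpt}, costing $f(k)\cdot n^{\mathcal{O}(1)}$.

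Point location for $A$ and $B$ and Dijkstra on $G$ are polynomial in $N$. The total running time is therefore $f'(k)\cdot n^{\mathcal{O}(k)}$, which is \XP.

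\textbf{Correctness.} Fix an optimal schedule $M$ and its chronologically sorted critical configurations $T_1,T_2,\dots$.

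First I prove the $k$-robot analogue of the Claim: every covered configuration lies in some state of $G$. For each pair $i<j$, the two-robot argument applies to $Z_{i,j}$. If $Z_{i,j}$ contains a unit square, we take $\sigma_{i,j}=0$. Otherwise the robots are separated along the $x$-axis (forcing $w\ge1$) or along the $y$-axis (forcing $h\ge1$), and a matching $\sigma_{i,j}\in\Sigma_{i,j}$ exists. Because each $\Sigma_{i,j}$ is chosen independently, the product construction contains the combined tuple $\Sigma$.

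Next, each segment of $M$ between consecutive critical configurations, or between $A$/$B$ and the first/last $T_i$, is either fully covered or fully exposed.
\begin{itemize}
\item A covered segment is reflected by zero-exposure edges, by correctness of the feasibility algorithm.
\item An exposed segment costs at least the \MMCMP value between the two states, which is exactly the positive edge weight computed by \cref{theorem:states_fpt}.
\end{itemize}
Conversely, every path in $G$ can be realized by concatenating the corresponding schedules. This gives a schedule whose exposure is at most the path length, so $d_G(A,B)$ equals the optimum.

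\textbf{Main obstacle.} The hard step is justifying that a single representative configuration per state suffices for the zero-exposure edges. That is, all configurations in a state must be mutually reachable without exposure, so that testing one pair of representatives decides connectivity between the entire states.

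For two robots the paper handles this with explicit extreme-point choices. For $k$ robots I would try to show that the constraints $\sigma_{i,j}$ define a set whose covered configurations are connected, using the ordering-preserving one-turn schedules of \cref{lem:k-robots-min-makespan-same-ordering} inside the convex trapezoids. Moving robots in sorted order toward a canonical extreme position along the constrained axis should then give a feasible path within the state.

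If this fails, the fallback is to refine states into connected components of the free space restricted to $\prod X_i$. These components can be computed in $n^{\mathcal{O}(k)}$ time by cell decomposition, which keeps the algorithm \XP.
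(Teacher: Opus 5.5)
Your outline follows the paper's route: the same graph $G$ (tuples of trapezoids combined with the product of the pairwise sets $\Sigma_{i,j}$), zero-exposure edges from the Sharir--Sifrony feasibility test, positive edges from \cref{theorem:states_fpt}, Dijkstra, and correctness argued as in the two-robot proof. The paper itself gives nothing beyond the construction.

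The step you single out is a genuine gap, and your first route for closing it cannot work once $k\ge 3$. A state containing some $\sigma_{i,j}=0$ need not be zero-exposure connected. Nor does \cref{lem:k-robots-min-makespan-same-ordering} apply, since configurations in such a state are not commonly ordered. Concretely, take $\mathcal S=\{[0,2]^2\}$, so that $\inner(S)=X=[\tfrac12,\tfrac32]^2$ is a single trapezoid. For $k=3$, every $Z_{i,j}=X$ contains a $1\times 1$ square, so $G$ has the single state $((X,X,X),(0,0,0))$. In any covered configuration, $\norm{p_i-p_j}_\infty\ge 1$ with both centers in a unit square forces an extreme coordinate, so every $p_i$ lies on $\partial X$. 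Hence the cyclic order of the three robots along $\partial X$ is invariant under covered motion. Two covered configurations $A,B$ with opposite cyclic orders therefore have positive minimum exposure. Yet both lie in the one state, so $d_G(A,B)=0$ however they are attached, since a positive edge from $B$ to a state containing $B$ also has weight $0$. The paper uses the same pairwise $\Sigma_{i,j}$ and is silent on this point; as stated, its construction is affected too.

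So your fallback is not optional; it is what makes the argument correct. The simplest version is to index states by a trapezoid tuple together with a full separating choice for every pair: one of $x_i\ge x_j+1$, $x_j\ge x_i+1$, $y_i\ge y_j+1$, or $y_j\ge y_i+1$. Each such state is a convex polytope containing only feasible, covered configurations. Straight-line motion then makes it zero-exposure connected, any representative is valid, and connections between pieces are found by the feasibility edges. This keeps $\mathcal{O}(4^{\binom k2}n^k)$ vertices, and each positive edge weight is a single LP of the kind used for \cref{theorem:states_fpt}. If you instead merge pieces into connected components as you propose, each positive edge weight must be the minimum of the LP values over the constituent convex pieces, not one call to \cref{theorem:states_fpt}. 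With either change, the rest of your correctness and running-time argument goes through.
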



\section{Feasibility of motion for two robots in a polygonal environment}
\label{sec:two-robots-in-a-polygonal-environment}

A critical subproblem of \MECMP is determining whether robots can swap positions within a covering polygon without incurring any exposure time.
This problem corresponds to a variant of \CMP in a polygonal domain.
While $\mathcal{O}(n^2)$ time algorithms for two robots are known~\cite{ramanathan.alagar1985algorithmicmotionplanning},
we provide a simpler algorithm.
Moreover, our algorithm is optimized for repeated querying.

Our approach consists of two steps.
We first preprocess the polygonal domain to create an efficient query structure in $\mathcal{O}(n^2)$ time, where $n$ is the number of vertices.
Upon completion, any query to this data structure takes $\mathcal{O}(\log n)$ time.

\theoremFeasibilityForTwoRobots*

Let $S$ be a polygonal domain with $n$ vertices, and let $\inner(S)$ be its inner Minkowski sum with the unit square $\unitsquare$, see~\cref{def:inner-minkowski}.

On a high level, we reduce the problem by showing that for any configuration of two robots in a polygonal domain $S$, there exists a schedule that places both robots at vertices of $\inner(S)$, while preserving the ordering of the two robots along a chosen axis.
We then partition the finite set of resulting configurations into mutually reachable sets.

We start by defining two relevant properties of configurations of two robots within polygonal domains.
A configuration $(p_1,p_2)\in\mathcal{F}_2[S]$ is a \emph{boundary configuration} if both $p_1$ and $p_2$ are on the boundary of $\inner(S)$.
In the special case that both robots are placed on \emph{vertices} of $\inner(S)$, we speak of a \emph{corner configuration}.
By this definition, the configuration shown in~\cref{fig:inner-minkowski} is a boundary configuration, but not a corner configuration.
We start by showing that every feasible configuration can be \emph{normalized}, that is, transformed into a commonly ordered boundary configuration.
\begin{lemma}
    Any feasible configuration within a polygonal domain $S$ can be normalized to a boundary configuration by a schedule in which neither robot makes a turn, or a corner configuration by a schedule in which both robots make at most one turn.
    \label{lem:normalization}
\end{lemma}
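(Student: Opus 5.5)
Fix a feasible configuration $(p_1,p_2)\in\mathcal{F}_2[S]$. Since it is feasible, there is an axis-parallel separator between the two robots. By symmetry I assume it is vertical, say $x(p_1)+1\leq x(p_2)$; the horizontal case is handled by exchanging the roles of the axes. Throughout, the schedules keep this ordering, so the normalized configuration is commonly ordered with the original one.

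\emph{Step 1: reaching a boundary configuration with no turns.} I move $R_1$ horizontally to the left, away from $R_2$, until $p_1$ hits $\partial\inner(S)$. Since $\inner(S)$ is bounded, the ray leftward from $p_1$ leaves $\inner(S)$ at some first point, and the segment up to that point lies in $\inner(S)$. During this motion $x(p_1)$ only decreases, so $x(p_1)+1\leq x(p_2)$ is preserved and no collision can occur. Then I move $R_2$ horizontally to the right until it hits $\partial\inner(S)$; by the same argument the separator persists. Each robot moves along one straight segment, so no robot turns. The result is a boundary configuration in the same ordering, which is the first part of the statement.

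\emph{Step 2: reaching a corner configuration with at most one turn per robot.} Now $p_1$ lies on some edge $e_1$ of $\inner(S)$. I slide $R_1$ along $e_1$ toward the endpoint that does not increase $x$ (the endpoint of smaller or equal $x$-coordinate). I do the same for $R_2$ along its edge $e_2$, choosing the endpoint of larger or equal $x$-coordinate. Sliding along an edge keeps the robot in $\inner(S)$, because the edge lies on the boundary of $\inner(S)$ and is included in the region. The moves are monotone in $x$ away from the separator, so $x(p_1)+1\leq x(p_2)$ is again preserved. Each robot's trajectory is one horizontal segment followed by one segment along an edge, which gives at most one turn per robot. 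A vertical edge, or an edge whose endpoints have equal $x$, can be slid in either direction harmlessly.

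\emph{Main obstacle.} The difficulty lies in the details of $\inner(S)$ rather than in the collision argument. $\inner(S)$ may be degenerate: it can have lower-dimensional parts such as segments or isolated points, and edges that are not axis-parallel. Its boundary also has to be treated correctly when $S$ has holes. I must check two things. First, a horizontal ray from $p_1$ inside $\inner(S)$ stays inside until it first meets $\partial\inner(S)$. Second, that first boundary point lies on an edge whose endpoint in the chosen direction is reachable along the edge without crossing the other robot's side. This second check holds because the $x$-monotonicity is all that matters. If $p_1$ is already on the boundary, or the chosen endpoint coincides with $p_1$, the corresponding segment has length zero and the turn count only drops. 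I expect the degenerate cases to be handled by treating a zero-length move as no move.
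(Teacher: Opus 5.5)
Your proof is correct and follows essentially the same route as the paper: push each robot away from an axis-parallel separator until it reaches $\partial\inner(S)$ with no turns, then slide it along the edge it lies on toward the endpoint that does not approach the separator, which preserves the separation by monotonicity and adds at most one turn. The only differences are cosmetic: you move the robots one after the other rather than simultaneously, and you explicitly flag the degenerate parts of $\inner(S)$ (zero-length moves, lower-dimensional pieces), which the paper's proof leaves implicit.
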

\begin{proof}
    Recall that $\mathcal{F}_2[S]\subseteq\mathcal{F}_2$, and there always exists an axis-parallel bisector $\ell$ that separates the two robots.
    Let $P\in\mathcal{F}_2[S]\cap\mathcal{F}_2^\rightarrow$, all other cases are symmetric.
    To normalize, we then move both robots away from that bisector until both reach the boundary of $\inner(S)$.

    If we require the resulting configuration to be a corner configuration, let $e_i=\{u_i,v_i\}$ be the edge of~$\partial\inner(S)$ that $R_i$ reaches at $w\in\mathbb{R}^2$.
    If $e_i$ runs parallel to the separator $\ell$, then~$R_i$ can simply travel along the edge to one of its incident vertices, which will never result in the robot crossing the bisector.
    Otherwise, one of vectors in $\{\overrightarrow{w_{i}v_i},\overrightarrow{w_{i}u_i}\}$ points away from the bisector $\ell$.
    Moving $R_i$ to this vertex will then also never result in it crossing the bisector, resulting in a feasible schedule if both robots apply these rules symmetrically.
    Both steps are illustrated in~\cref{fig:move-to-boundary}.
    Note that in a diagonal configuration, we can normalize horizontally or vertically, but can only preserve \emph{both} orderings if we do not require the normalization to result in a corner configuration.
\end{proof}
\begin{figure}%
    \captionsetup[subfigure]{justification=centering}%
    \begin{subfigure}[t]{0.5\columnwidth}%
        \centering%
        \includegraphics[page=1]{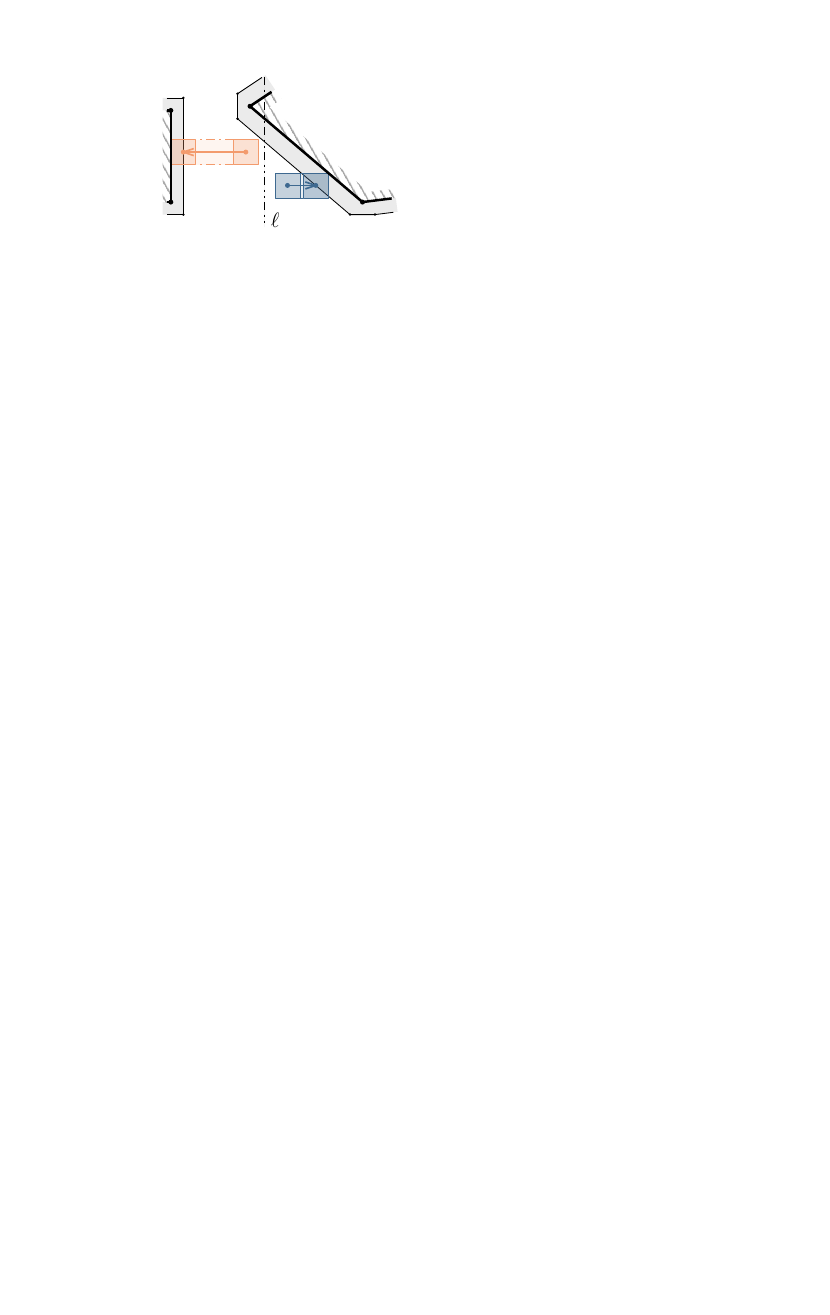}%
        \subcaption{}%
    \end{subfigure}%
    \begin{subfigure}[t]{0.5\columnwidth}%
        \centering%
        \includegraphics[page=2]{move-to-boundary}%
        \subcaption{}%
    \end{subfigure}%
    \caption{Moving to a boundary or corner configuration can be done in an order-preserving~manner.}
    \label{fig:move-to-boundary}
\end{figure}
Normalization can be used to implement an efficient data structure that gives a reachable corner configuration for an arbitrary configuration, discretizing the search space for feasibility.

\begin{lemma}
    \label{proposition:convex_corners}
    Let $S$ be a polygonal domain with $n$ vertices.
    There exists a data structure that can be computed in~$\mathcal{O}(n\log n)$ time that maps feasible configurations of two robots in~$S$ to reachable corner configurations in $\mathcal{O}(\log n)$ query time and $\mathcal{O}(n)$ space.
\end{lemma}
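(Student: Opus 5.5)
The plan is to make the normalization of \cref{lem:normalization} algorithmic with standard ray-shooting and point-location structures on $\inner(S)$. For each query configuration $P=(p_1,p_2)$ we first pick a separating axis-parallel bisector, which takes constant time by comparing coordinates. Then we shoot, for each robot, a ray away from the bisector to the boundary of $\inner(S)$. Finally, we slide along the hit edge to the endpoint that points away from the bisector. The output is the corner configuration reached this way, and by \cref{lem:normalization} it is reachable from $P$.

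For preprocessing, we first compute $\inner(S)$ in $\mathcal{O}(n\log n)$ time as in the runtime analysis of \cref{thm:mecmp-for-two-robots}; it has $\mathcal{O}(n)$ vertices. Axis-parallel ray shooting in a polygonal domain reduces to point location in a trapezoidal decomposition. A ray shot horizontally from a point $p$ ends at the left or right boundary of the trapezoid of the horizontal decomposition that contains $p$, since decomposition edges are horizontal and never block horizontal motion. Symmetrically, vertical rays use the vertical decomposition. We therefore build both the horizontal and the vertical trapezoidal decomposition of $\inner(S)$. Each has $\mathcal{O}(n)$ size, can be computed in $\mathcal{O}(n\log n)$ time, and supports a point-location structure with $\mathcal{O}(\log n)$ query time and $\mathcal{O}(n)$ space. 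Each trapezoid stores pointers to the polygon edges bounding it on its non-decomposition sides. For each such edge $e=\{u,v\}$ we precompute which endpoint lies in each of the four directions; an endpoint pointing away from a bisector in direction $\pm x$ or $\pm y$ is just the endpoint with extreme coordinate in that direction. If $e$ is parallel to the ray direction, either endpoint works. In all cases the choice is constant-size data per edge.

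A query then does the following. It tests which of the four orderings $P$ lies in (ties for diagonal configurations are broken arbitrarily), makes two point-location calls in the decomposition of the appropriate orientation, reads off the hit edges and the hit points, and returns the stored endpoints. The total time is $\mathcal{O}(\log n)$.

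The main obstacle is the correctness of the reduction from ray shooting to point location. In particular, we must handle $p_i$ lying exactly on the boundary of $\inner(S)$ or on a decomposition segment, and we must make sure the endpoint chosen really reaches a vertex without crossing the bisector. This is the case analysis already done in \cref{lem:normalization}, so the remaining task is to verify that a robot sliding along the boundary stays in $\inner(S)$ and does not collide with the other robot, which moves on the opposite side of the bisector. Degenerate queries are resolved by consistently taking the trapezoid on the side away from the bisector, so that the output corner configuration preserves the chosen ordering.
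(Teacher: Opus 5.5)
Your proposal is correct and matches the paper's proof. You compute $\inner(S)$ and build horizontal and vertical trapezoidal decompositions with point location. You locate each $p_i$ in the decomposition matching its direction away from the bisector to get the first edge hit, then pick in $\mathcal{O}(1)$ time the endpoint pointing away from the bisector.

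One small slip: either endpoint works when the hit edge is parallel to the bisector, not parallel to the ray. Your extreme-coordinate rule already handles this correctly.
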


\begin{proof}
    \cref{alg:find-corner-configuration} maps an arbitrary configuration $P\in\mathcal{F}_2[S]$ to a corner configuration.
    Note that finding the relevant edges on line three takes $\Omega(n)$ time without preprocessing.

    \begin{algorithm}
        \caption{Determining a reachable corner configuration}
        \label{alg:find-corner-configuration}
        \textbf{Input:} A configuration $P=(p_1,p_2)$ in a polygonal domain $S$.\\
        \textbf{Output:}  A reachable corner configuration $P'=(p'_1,p'_2)$.
        \begin{algorithmic}[1]
            \State $\ell\gets$ axis-parallel bisector that separates $R_1$ and $R_2$ in $P$
            \For{$i\in\{1,2\}$}
                \State $\{u,v\}\gets$ first edge of $\inner(S)$ that $R_i$ will hit when moving away from $\ell$
                \State $w\gets$ the point on the edge $\overline{uv}$ that $R_i$ will hit
                \State $p'_i\gets v$ \textbf{if} $\overrightarrow{wv}$ points away from $\ell$ \textbf{else} $u$
            \EndFor
            \State \Return{$(p'_1,p'_2)$}
        \end{algorithmic}
    \end{algorithm}

    We accelerate applications of~\cref{alg:find-corner-configuration} by precomputing the following information.
    Start by computing the inner Minkowski sum $\inner(S)$ of the boundary $\partial S$, which includes the boundary of the holes, and the unit square $\unitsquare$ in $\mathcal{O}(n\log n)$ time~\cite{DBLP:journals/dcg/LevenS87}.
    We then compute both a horizontal and a vertical decomposition of $\inner(S)$, dividing it into trapezoidal regions in $\mathcal{O}(n\log n)$.
    In the same time, we can construct a point-location data structure, such that such queries can be answered in~$\mathcal{O}(\log n)$ time~\cite{Kirkpatrick_point_location}.
    Every face in the horizontal decomposition is then bounded from the left and right by edges of $\inner(S)$ and from above and below by two cuts introduced by the decomposition (or a cut and an edge of the polygonal domain), and analogously for the vertical decomposition, see~\cref{fig:inner-minkowski}.

    Assume without loss of generality that $R_i$ moves away from $\ell$ to the right.
    The center~$p_i$ of $R_i$ can be mapped uniquely to a trapezoidal region using a point location query in the horizontal decomposition in $\mathcal{O}(\log n)$ time.
    The right edge of this trapezoid in~$\inner(S)$ is then always the first edge that $R_i$ will hit.
    Looking up the relevant incident vertex is a constant-time query, depending on the representation of the polygonal domain~$S$.
\end{proof}

Having reduced the search space to boundary configurations within the given polygonal domain, we additionally restrict the class of schedules that need to be considered.
Due to~\cite[Lemma 3.1]{argawal.halperin.sharir.steiger2024near-optimal}, if an instance is feasible, there exists a schedule that is \emph{decoupled}, i.e., only one of the two robots is moving at given point in time.
Robots then either wait or move at exactly unit~speed. Note that additionally we may assume that the robots move along polygonal trajectories, as such a solution always exists in a polygonal domain.

We now show that in a polygonal domain $S$ (with or without holes), it suffices~to~consider trajectories along edges of $\inner(S)$ and its vertical and horizontal~decompositions.

\begin{lemma}
    \label{lem:exists-boundary-schedule}
    Let $S$ be a polygonal domain.
    For any pair of corner configurations in~$\mathcal{F}_2[S]$, if there exists a feasible schedule within $S$, then there is also a schedule that travels only along edges of $\inner(S)$ and its vertical and horizontal decompositions.
\end{lemma}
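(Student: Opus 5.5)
We start from an arbitrary feasible schedule $M$ between two corner configurations $A,B\in\mathcal{F}_2[S]$. By \cite[Lemma 3.1]{argawal.halperin.sharir.steiger2024near-optimal} we may assume $M$ is decoupled and polygonal. It therefore splits into finitely many phases; in each phase one robot, say $R_i$, moves along a polygonal path in $\inner(S)$ while the other robot $R_j$ is fixed at some point $q$. During such a phase the free space of $R_i$ is $\inner(S)\setminus O(q)$, where $O(q)=\{p\mid \norm{p-q}_\infty<1\}$ is the open $2\times 2$ square around $q$. We rewrite the schedule phase by phase, and at phase boundaries we normalize the parked robot to a vertex using \cref{lem:normalization}. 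The final schedule uses only segments on edges of $\inner(S)$ and on the cuts of its horizontal and vertical decompositions.

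\textbf{Step 1 (park at corners).} Before each phase starts, the stationary robot can be slid without turning onto the boundary of $\inner(S)$, and then along that edge onto a vertex, as in \cref{lem:normalization}. We move it away from a separating bisector, so feasibility is preserved. We do this once at the start of each phase and undo it at the end, so the moving robot never needs to meet the parked one in an unusual position. For this to be harmless we must check that normalization moves the robot only \emph{away} from the moving robot. This holds because we choose the bisector that separates the two robots, and so the moving robot's free space only grows.

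\textbf{Step 2 (single-robot rerouting).} With $R_j$ parked at a vertex $v$ of $\inner(S)$, the free space of $R_i$ is $F=\inner(S)\setminus O(v)$. The boundary of $O(v)$ consists of axis-parallel segments at distance $1$ from $v$. If $R_i$ can move from $p$ to $p'$ within a connected component of $F$, then it can also do so by going to $\partial F$ and walking along $\partial F$. Each edge of $\partial F$ is either part of an edge of $\inner(S)$ or part of a side of $O(v)$. The delicate point is that sides of $O(v)$ are not edges of $\inner(S)$. Since $v$ is a vertex, these sides are axis-parallel lines through points at distance $1$ from a vertex, whereas the decomposition cuts pass through the vertices themselves.

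\textbf{Step 3 (the main obstacle).} Here I would re-park $R_j$ so that the relevant separating line coincides with a decomposition cut. Instead of a turn-free pushed position, we allow $R_j$ to sit anywhere on its boundary edge, and we let $R_i$ walk along $\partial\inner(S)$ and along horizontal or vertical cuts. I would first try the following ordering argument. Whenever $R_i$ travels along a side of $O(v)$, we move $R_j$ instead (the schedule is decoupled, so roles may swap) so that the two robots stay separated by a bisector through a vertex. The crossing then happens while the moving robot follows the decomposition cut emanating from that vertex.

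\textbf{Step 4 (finish).} Concatenating the rerouted phases yields a schedule between the same corner configurations in which every segment lies on an edge of $\inner(S)$ or of one of its two decompositions. Termination follows because the original schedule has finitely many phases and each phase is replaced by finitely many segments. I expect Step 3 to be the hard part: handling how $O(v)$ fails to align with the decomposition cuts, especially in narrow corridors where the robots cannot pass each other.
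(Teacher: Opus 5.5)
There is a genuine gap, and it is not confined to Step~3. In Step~1 you claim that relocating the parked robot $R_j$ from $q$ to a vertex $v$ makes the free space of $R_i$ ``only grow''. That is false: $\inner(S)\setminus O(q)$ and $\inner(S)\setminus O(v)$ are incomparable. The relocation is harmless when $R_i$ stays on its own side of the chosen bisector for the whole phase. However, you cut phases only where the decoupled schedule switches robots, so $R_i$ may walk around $R_j$ and pass through the region now covered by $O(v)$.

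Step~2 relies on walking along $\partial F$, and this fails in two ways. First, as you note, the sides of $O(v)$ are neither edges nor cuts. Second, when $S$ has holes, a component of $F$ can have several boundary components, so $p$ and $p'$ need not be connected along $\partial F$ at all. Step~3 gives no reason why a suitable re-parking of $R_j$ exists, for instance in a narrow corridor. Finally, you keep the original phase endpoints and undo the parking. The output therefore still contains segments (into $\partial F$, and back to $q$ and to $p'$) that lie on no edge of $\inner(S)$ and no decomposition cut.

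The missing idea is to avoid the obstacle square altogether rather than walk around it. Split $M$ additionally at turns and at every change of ordering. Then each piece moves one robot along a straight segment while a fixed axis-parallel separator $\ell$ (say horizontal, with $R_1$ above) stays valid throughout. Replace the endpoints of each piece by their turn-free boundary normalizations from \cref{lem:normalization}, pushing both robots perpendicularly away from $\ell$ onto $\partial\inner(S)$. Then replace the segment of $R_1$ by its upward projection onto $\partial\inner(S)$.

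This projected path consists of pieces of edges of $\inner(S)$ and vertical jumps at $x$-coordinates of vertices, which lie on vertical decomposition cuts. Projecting only moves $R_1$ further from $\ell$, so it can never meet $R_2$, which is parked on the other side of $\ell$. When the normalization axis must change between consecutive pieces, the shared configuration is diagonal. You connect its vertical and horizontal normalizations by one more application of the same projection claim. This is also why you must normalize to the boundary rather than to corners: a corner normalization of a diagonal configuration cannot preserve both separators.
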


\begin{proof}
    Let $A,B\in\mathcal{F}_2[S]$ be two feasible corner configurations in $S$.
    Let $M=(m_1,m_2)$ be a decoupled schedule from $A$ to $B$ over the time interval $T=[t_0,\makespan{M}]$.
    We split~$T$ at every point in time at which a robot starts, stops, makes a turn, or enters another ordering of $\mathcal{F}_2$ and denote these by~${t_0,\ldots, t_\rho = \makespan{M}}$.
    This divides $T$ into $\rho-1$ intervals such that (i) in each interval~${T_i=[t_i,t_{i+1}]}$ with $0\leq i < \rho$, either both robots are waiting or exactly one robot $R_j$ is moving and the image of its trajectory $m_j[T_i]$ is a straight line, and (ii) that any pair of configurations $M(t_i)$ and $M(t_{i+1})$ has a common ordering with respect to the plane transition graph from~\cref{sec:two-robots-cmp}.

    Our approach relies on~\cref{lem:normalization} to determine (boundary) normalizations of intermediate configurations that preserve all separating bisectors.
    We first show that for each of the $\rho$ time intervals, we can determine a schedule between \emph{uniform} (i.e., along the same axis) normalizations of its start and end configurations.

    \begin{claim}
        For each time interval $T_i=[t_i,t_{i+1}]$, if $M(t_i)$ and $M(t_{i+1})$ can be normalized along the same axis, there exists a schedule from one normalization to the other that moves modules along the edges of $\inner(S)$ and its vertical and horizontal decomposition.
        \label{clm:uniform-normalization}
    \end{claim}
    \begin{claimproof}
        Consider now an arbitrary time interval $T_i=[t_i,t_{i+1}]$.
        We assume, without loss of generality, that $R_1$ moves along a straight line segment during $T_i$, while $R_2$ does not move at all, and that the configurations $P=M(t_i)$ and $Q=M(t_{i+1})$ have a common horizontal separator $\ell$ such that $R_1$ lies above.
        We denote by $\overline{P}$ and $\overline{Q}$ the vertical normalizations of~$P$ and $Q$, respectively.
        Note that $\overline{p_2}=\overline{q_2}$, since $R_2$ does not move.
        Furthermore, if~$m_1[T_i]$ is a vertical line, it follows that $\overline{p_1}=\overline{q_1}$ and we are done.

        \begin{figure}%
            \begin{subfigure}[t]{0.5\textwidth -0.5em}%
                \centering%
                \includegraphics[page=1]{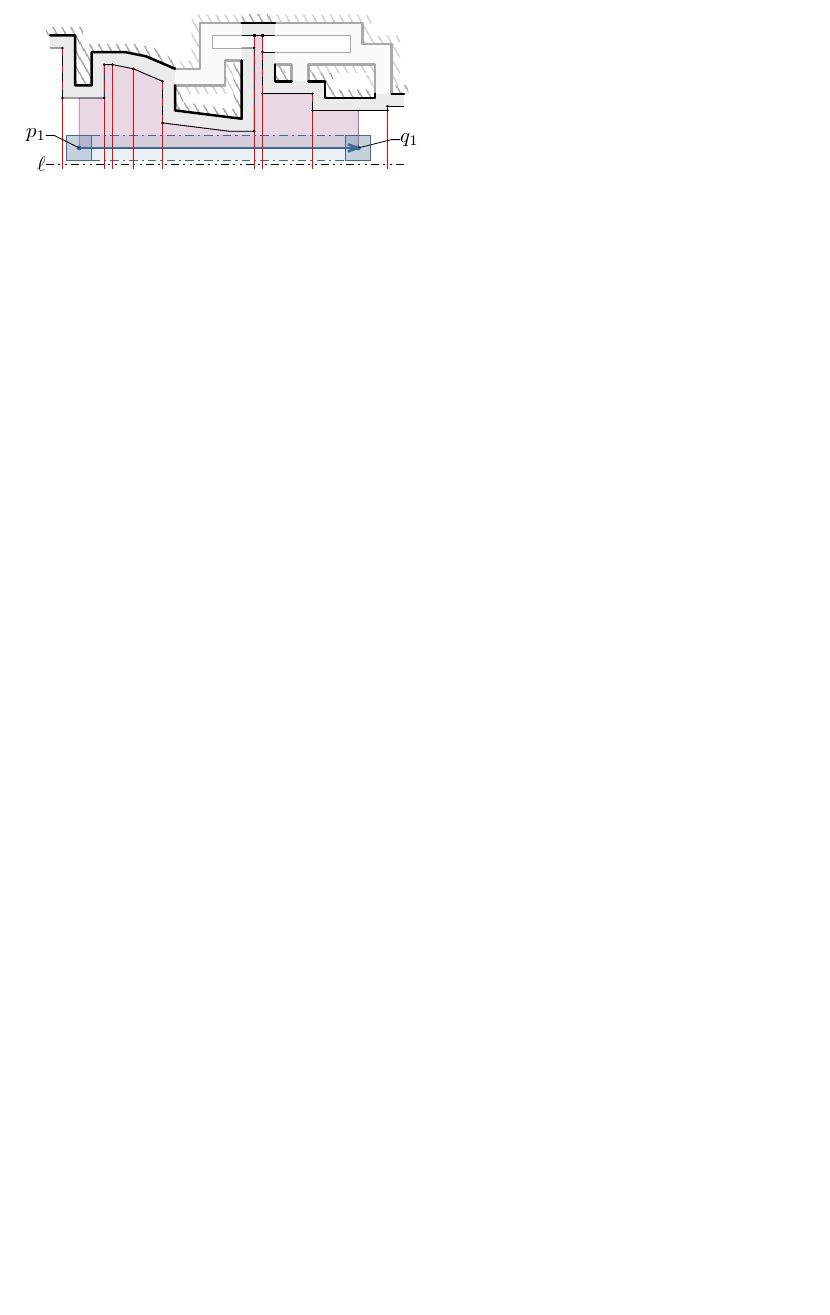}%
                \subcaption{The vertical projection from $m_1[T']$ (purple).}
                \label{fig:boundary-schedule-a}
            \end{subfigure}%
            \hfill%
            \begin{subfigure}[t]{0.5\textwidth -0.5em}%
                \centering%
                \includegraphics[page=2]{boundary-schedule}%
                \subcaption{The resulting trajectory from $\overline{p_1}$ to $\overline{q_1}$.}
                \label{fig:boundary-schedule-b}
            \end{subfigure}%
            \caption{We determine piecewise replacements for trajectories along the boundary of $\inner(S)$.}
            \label{fig:boundary-schedule-histogram}
        \end{figure}

        We thus assume that $x(p_1)\neq x(q_1)$ as shown in~\cref{fig:boundary-schedule-a}.
        To determine a path between~$\overline{p_1}$ and~$\overline{q_1}$ that follows edges of~$\inner(S)$ and its vertical decomposition, we can now simply move each point on the line segment~$m_1[T_i]$ upwards onto the boundary of~$\inner(S)$ and output the corresponding boundary points, see~\cref{fig:boundary-schedule-b}.
    \end{claimproof}

    To obtain a schedule from $A$ to $B$ with the desired properties, we iteratively assign every event point a normalization axis and then show that there exists a schedule from the normalization of $M(t_i)$ to the normalization of $M(t_{i+1})$, for every $0\leq i<\rho$.

    We argue by induction.
    For the base case, assume without loss of generality that~${M(t_0)=A}$ has a horizontal separator.
    The configuration can then be normalized vertically by a schedule that has the desired properties, i.e., that only moves robots along edges of $\inner(S)$ and its vertical and horizontal decompositions, as $A$ itself is a corner configuration.

    Let $t_i$ be an event point such that there exists a schedule with the desired properties from $A$ to \emph{some} fixed normalization of $M(t_i)$.
    Without loss of generality, let that normalization axis be vertical.
    We show that the schedule from $A$ to the normalization of $M(t_i)$ can then be extended to reach some normalization of $M(t_{i+1})$.
    If $M(t_{i+1})$ can also be normalized vertically, we can simply assign it this axis and by~\cref{clm:uniform-normalization}, we are done.
    Assume therefore that $M(t_{i+1})$ can only be normalized horizontally.
    This implies that $M(t_i)$ can be normalized horizontally as well, as the two configurations must share a common ordering.
    It thus suffices to show that there exists a schedule from the vertical normalization of $M(t_i)$ to its horizontal normalization, which then yields the previous case and concludes our proof.

    Let $P_h$ ($P_v$) be the horizontal (vertical) normalization of a diagonal configuration $P$ to the boundary.
    By definition of the normalization procedure, both $P_h$ and $P_v$ are also diagonal, and there exists a schedule from $P$ to $P_h$ ($P_v$) in which neither robot makes a turn.
    We can then normalize $P$ and $P_{v}$ horizontally and compute a schedule from one normalization to the other by~\cref{clm:uniform-normalization}, as they are uniformly normalized.
    This implies that we can get from $P_h$ to the horizontal normalization of $P_v$.
    The schedule from its horizontal normalization to $P_v$ is trivial (straight line), so we are done.
\end{proof}

We now have all the necessary tools to prove~\cref{thm:feasibility-for-two-robots}:

\theoremFeasibilityForTwoRobots*
\begin{proof}
    For the sake of readability, let $n=\abs{V(\inner(S))}$, which is $\mathcal{O}(\abs{V(S)})$.

    We proceed as follows, starting by computing the inner Minkowski sum $\inner(S)$ of $S$ and the unit square $\unitsquare$.
    We then compute the data structure from~\cref{proposition:convex_corners} in $\mathcal{O}(n\log n)$ time.

    To construct the reachability data structure, we define a graph $G_S$, which has the vertex set~$V(G_S)=V(\inner(S))^2$, i.e., $n^2$ vertices.
    For every vertex $(a,b)\in V(G_S)$, we first determine whether it represents a feasible configuration, which can be done by an intersection check.
    If $(a,b)$ is feasible, we check whether it is possible to move each of the two robots along each of (i) the clockwise and counterclockwise incident edges of the respective boundary cycle of $\inner(S)$, and (ii) the incident edges induced by the vertical and horizontal decompositions of $\inner(S)$.
    For each of the (up to) twelve relevant edges, a simple intersection check of constant complexity suffices to check whether a collision would occur.
    We then add an edge to $E(G_S)$ exactly if the movement along the respective edge of $\inner(S)$ is collision-free.

    The resulting graph $G_S$ is sparse: every vertex has degree at most twelve.
    We finally compute the set of connected components of the graph in $\mathcal{O}(\abs{V(G_S)} + \abs{E(G_S)})=\mathcal{O}(n^2)$ time using breadth-first search and enter all vertices into, e.g., a hash table that maps each vertex to an identifier for its connected component.

    To query the resulting structure for feasibility of reconfiguration, we determine a reachable corner configurations from the input using the first structure in $\mathcal{O}(\log n)$ time.
    We check our component table in $\mathcal{O}(1)$ expected time to determine whether these corner configurations can be reconfigured into one another.
\end{proof}


\section{Conclusions}
\label{sec:conclusions}

We have presented the \MECMP problem and provided an $\mathcal{O}(n^4\log n)$ time algorithm to solve it optimally for two square robots between polygonal domains with $n$ vertices in total under the $L_1$ metric. We generalized this algorithm to a fixed number of robots.
As notable result of independent interest, we leverage new insights to prove that both \MSCMP and \MMCMP are \FPT parameterized by the number of robots.
In particular, we showed that the number of turns that each robot needs to make in an optimal solution is bounded by $2\cdot 4^{\binom{k}{2}}$, improving upon known bounds on the size of witnesses even for the restricted integer grid setting.

Our work raises a number of follow-up questions.
In particular, \MMCMP is known to be \NP-hard even for constant objective, but its containment in \NP is, to the best of our knowledge, still open.
Further investigating the 
transition graph could provide additional insight:
If the length of relevant paths is polynomial in $k$, membership in~\NP~follows.

Furthermore, our approaches appear applicable to higher dimensional spaces, although this likely to hinges on a generalization of~\cref{lem:k-robots-min-makespan-same-ordering}. 
Finally, the exposure objective could be refined to account for partial exposure, rather than the binary definition used in our work.
Defining $\exposure{P,\mathcal{S}}$ as the total uncovered area, for instance, seems like a natural extension.

    \bibliography{references}
\end{document}